\documentclass[aps,pra,twocolumn,superscriptaddress,floatfix,
nofootinbib,showpacs,longbibliography]{revtex4-1}
\usepackage[utf8]{inputenc}  
\usepackage[T1]{fontenc}     
\usepackage[british]{babel}  
\usepackage[sc,osf]{mathpazo}\linespread{1.05}  
\usepackage[scaled=0.86]{berasans}  
\usepackage[colorlinks=true, citecolor=blue, urlcolor=blue]{hyperref}  
\usepackage{graphicx} 
\usepackage[babel]{microtype}  
\usepackage{amsmath,amssymb,amsthm,bm,amsfonts,mathrsfs,bbm} 

\usepackage{xspace}  
\usepackage{pgf,tikz}
\usepackage{xcolor}
\usepackage{multirow}
\usepackage{array}
\usepackage{bigstrut}
\usepackage{braket}
\usepackage{color}
\usepackage{natbib}
\usepackage{multirow}
\usepackage{mathtools}
\usepackage{float}
\usepackage{soul}
\usepackage{xcolor,colortbl}
\usepackage{color}
\usepackage[justification=justified, format=plain]{subcaption}
\usepackage[justification=raggedright]{caption}

\newcommand{\be}{\begin{equation}}
\newcommand{\ee}{\end{equation}}
\newcommand{\ba}{\begin{eqnarray}}
\newcommand{\ea}{\end{eqnarray}}

\newtheorem{theorem}{Theorem}
\newtheorem{corollary}{Corollary}
\newtheorem{definition}{Definition}

\newtheorem{remark}{Remark}

\allowdisplaybreaks





\def\>{\rangle}
\def\<{\langle}






\newtheorem{prop}{Proposition}



\begin{document}
	
\title{Quantum superposition of causal structures as a universal resource for local implementation of nonlocal quantum operations}	

\author{Pratik Ghosal}
\email{ghoshal.pratik$00$@gmail.com}    
\affiliation{Department of Physics, Bose Institute, Unified Academic Campus, Block EN, Sector V, Bidhan Nagar, Kolkata 700 091, India}

\author{Arkaprabha Ghosal}
\email{a.ghosal$1993$@gmail.com}
\affiliation{Department of Physics, Bose Institute, Unified Academic Campus, Block EN, Sector V, Bidhan Nagar, Kolkata 700 091, India}
\affiliation{Optics \& Quantum Information Group, The Institute of Mathematical Sciences, CIT Campus, Taramani, Chennai 600113, India}

\author{Debarshi Das}
\email{dasdebarshi$90$@gmail.com}
\affiliation{Department of Physics and Astronomy, University College London, Gower Street, WC1E 6BT London, England, United Kingdom}

\author{Ananda G. Maity}
\email{anandamaity$289$@gmail.com}
\affiliation{Networked Quantum Devices Unit, Okinawa Institute of Science and Technology Graduate University, Onna-son, Okinawa 904-0495, Japan}

\begin{abstract}
Spatial separation restricts the set of locally implementable quantum operations on distributed multipartite quantum systems. We propose that indefinite causal structure arising due to quantum superposition of different space-time geometries can be used as an independent universal resource for local implementation of any quantum operation on spatially distributed quantum systems. Consequently, all such quantum tasks that are not accomplishable by local operations and classical communication (LOCC) only also become locally accomplishable. We show that exploiting indefinite causal structure as the sole resource, it is possible to perfectly teleport the state of one agent's subsystem to the other distant laboratory in such a way that the agent at the distant laboratory can have access to the whole initially shared state in his or her laboratory and can perform any global quantum operation on the joint state locally. We further find that, after the teleportation process, the resource -- indefinite causal structure of the space-time does not get consumed. Hence, after implementing the desired quantum operation the state of the first agent's subsystem can be teleported back to its previous laboratory using the same resource. We show that this two-way teleportation is not always necessary for locally executing all nonlocal quantum tasks that are not realisable by LOCC only. Without invoking any kind of teleportation, we present a protocol for perfect local discrimination of the set of four Bell states that exploits indefinite causal structure as the sole resource. As immediate upshots, we present some more examples of such nonlocal tasks as local discrimination of the set of states exhibiting ``quantum nonlocality without entanglement" and activation of bound entangled states that are also achievable by our proposed protocol incorporating indefinite causal structure as a resource.
\end{abstract}
\maketitle

\section{Introduction}
Fundamental indefiniteness in the causal structure of space-time is a unique phenomenon that arises in quantum theory of gravity \cite{Hardy05,Hardy09,hardy2007towards,Brukner18}. In general relativity, the causal structure of space-time is definite and is determined by the distribution of matter-energy, which, although is assumed to be classical, can be dynamical in general. However, when the degrees of freedom of the matter-energy are taken to be quantum, this classical description of space-time is no longer tenable. Due to the presence of quantum matter, space-time with different geometries may exist in superposition, leading to the causal structure to be both dynamical (as in general relativity) as well as indefinite (as in quantum theory) \cite{Hardy05,Brukner18,Castro-Ruiz18}. Such a revolutionary concept, where the interplay between general relativity and quantum theory provides a fundamentally indefinite causal structure, was first proposed by Hardy \cite{Hardy05}. One striking consequence of this indefiniteness of the causal structure is that the causal order between two events in the space-time can also become indefinite.

A general question that naturally arises is whether this aforementioned indefiniteness can be utilised as useful resource for information processing tasks \cite{Hardy09}. Such application of indefinite causal order between two events was first manifested through the concept of quantum SWITCH \cite{Chiribella13}. In a quantum SWITCH, an auxiliary system is used to control the order of two quantum operations (each being associated with an event) acting on a single quantum system. Consequently, the auxiliary system prepared in a superposition of two orthogonal states, each corresponding to one of the two different orders of the two operations, makes their order indefinite. This type of coherent control of order of events has also been introduced in the process matrix formalism \cite{Oreshkov12}, capturing the most general way in which local operations can be implemented in spatially separated laboratories without assuming any predefined causal order between them. Indefiniteness of causal order of two events have been shown to be beneficial for several information processing tasks ranging from testing the properties of quantum channel \cite{Chiribella12}, winning of noncausal games \cite{Oreshkov12}, minimising quantum communication complexity \cite{Guerin16}, boosting the precision of quantum metrology \cite{Zhao20}, improving quantum communication \cite{Ebler18,Chiribella21,Mukhopadhyay19,koudia2021causal,Banik21,Das21}, to achieving quantum computational \cite{Araujo14} and thermodynamic advantages \cite{Tamal20,Vedral20,Maity22}. Experimental realisations of such advantages have also been reported \cite{Procopio15,Rubino17,Goswami18(1)}.

Another interesting application of indefinite causal structure is manifested in assessing the quantum nature of gravitational interaction. Very recently, two proposals have been put forward for testing the quantum nature of gravity in table-top experiments, based on quantum information theoretic arguments \cite{Bose17,Marletto17}. In these proposals, creation of entanglement between two initially separable distant masses via their mutual gravitational interaction is claimed to witness the quantum signature of gravitational interaction. It has been shown that behind these proposals, quantum superposition of space-time geometries plays an important role in the entanglement generation \cite{Christodoulou:2018cmk}.

Motivated by such wide applications of  indefinite causal structure of space-time,  in this article we address its role as a resource for local implementation of nonlocal quantum operations on quantum systems distributed among multiple spatially separated parties. Here, by the term ``nonlocal quantum operations", we  mean only those quantum operations that are not implementable by Local Operation and Classical Communication (LOCC) alone on the spatially separated quantum systems.

Since the set of quantum operations accomplishable on spatially separated quantum systems by LOCC is a strict subset of all quantum operations, only some limited quantum tasks can be performed locally by spatially separated parties \cite{Chitambar2014} in the absence of any quantum correlation (or, more generally, any quantum resource). For example, if the parties share a quantum system in a product state, they will never be able to perform quantum teleportation of an unknown quantum state using LOCC only. However, with the assistance of suitable resources (for example, entanglement), these limitations can be lifted. We term the class of quantum tasks that cannot be executed by spatially separated parties using LOCC, in absence of any additional quantum resource, as ``nonlocal quantum tasks". It is important to mention here that there are several tasks (e.g., local state discrimination) that are impossible by LOCC alone, if only a single copy of the shared state is given. However, when multiple copies are accessible, then those tasks become possible without using any additional resource. For such quantum tasks, in this article, we explicitly make the assumption that only a single copy of the shared state is available which necessitates the use of an additional resource. Furthermore, note that there are several quantum tasks which may be accomplished in the absence of any quantum resource when classical communication is allowed (e.g., the  Clauser-Horne-Shimony-Holt game or CHSH game). Such tasks are also popularly known as nonlocal quantum tasks. However, by definition, we exclude those tasks from the set of our defined nonlocal quantum tasks.

Now, instead of focusing on any specific nonlocal task, we address whether any generic nonlocal quantum task (as defined earlier) can be accomplished by spatially separated classically communicating agents acting on shared quantum system when they have access to no resource except the indefiniteness of the causal structure of space-time. It turns out quite naturally that if the two agents can locally implement any desirable quantum operation on the state of their shared system, they would be able to successfully accomplish any nonlocal quantum task on it.

Let us now briefly explain how indefinite causal structure acts as a resource in the aforementioned contexts. We consider bipartite quantum systems shared between two spatially separated agents in a space-time with indefinite causal structure. We devise a protocol that the two agents, limited only to LOCC, follow to exploit this indefiniteness of causal-structure to perform any nonlocal operation as well as task. We show that using our proposed protocol, it is possible to perfectly teleport (with unit teleportation fidelity) any quantum state to distant laboratory without the necessity of any other resource such as entanglement. In particular, starting from a bipartite system $(S_1,S_2)$ shared between two spatially separated agents, say, Alice and Bob and an ancillary system $S_3$ belonging to Bob, Alice can teleport the state of $S_1$ to Bob's laboratory exploiting indefinite causal structure, such that the joint state of $(S_1,S_2)$ now becomes the state of $(S_3,S_2)$. Bob then performs the desired global quantum operation on the joint state of $(S_3,S_2)$. This teleportation protocol using indefinite causal structure as the sole resource is fairly novel and important in its own right. Interestingly, the indefiniteness of the causal structure of space-time is regenerated at the end of this teleportation process, which implies that it acts like a catalyst. Hence, after the implementation of the desired quantum operation, Bob can teleport back the state of $S_3$ to Alice, such that the transformed joint state of $(S_3,S_2)$ after the quantum operation becomes the shared state of $(S_1,S_2)$. Thus, indefinite causal structure of the space-time serves as a universal resource for local implementation of any quantum operation on the shared quantum systems, thereby making local execution of any nonlocal quantum task possible. 

It is noteworthy here that it is possible to generate a maximally entangled state between two distant locations using indefinite causal structure as resource and use that shared maximally entangled state for standard teleportation protocol \cite{Bennett1993} and execution of other nonlocal quantum tasks. In that case, the entanglement plays the role of an indispensable resource and the role of indefinite causal structure is only to generate the entanglement. This leads to the ambiguity about which resource is the primary one. However, the teleportation protocol that we devise fundamentally differs from the standard teleportation protocol \cite{Bennett1993} as it bypasses the necessary requirement of shared maximally entangled state and uses indefinite causal structure as the sole resource. In fact, shared maximally entangled state never arises in our teleportation protocol. This clearly demonstrates that indefinite causal structure can also be treated as a fundamental resource which can be used independent of entanglement for local implementation of nonlocal quantum operations and tasks. Furthermore, by presenting an example of nonlocal task viz., perfect local discrimination of the set of four Bell states, we reaffirm the independence of indefinite causal structure as a resource, and that the aforementioned back and forth teleportation is not always a necessary step for executing nonlocal tasks using LOCC and indefinite causal structure. As offshoots, we further show that indefinite causal structure can be utilised for execution of some other nonlocal tasks involving more than two parties, like perfect local discrimination of the set of tripartite orthogonal product states exhibiting quantum nonlocality without entanglement \cite{Bennet99}, and local activation of a four-qubit bound entangled state (Smolin state) \cite{Smolin2001}, when only two of the spatially separated parties have access to indefinite causal structure as resource.

The outline of this paper is as follows. In Sec. \ref{two}, we discuss nonlocal quantum operations and the role of resources for local implementation of those nonlocal operations. We then also discuss in fair details how quantum matter-energy degrees of freedom and, in particular, spatial superposition of gravitating mass can lead to the indefiniteness of causal structure of space-time. In Sec. \ref{three}, we design our protocol that exploits the indefiniteness of causal structure. In Sec. \ref{four}, we propose a perfect teleportation protocol using indefinite causal structure as resource that allows spatially separated parties to locally implement any nonlocal operation on shared quantum systems. This establishes indefinite causal structure as  a universal resource for local implementation of any nonlocal quantum operation as well as task. We also illustrate here the independence of indefinite causal structure as a resource from entanglement. Finally, we conclude with discussion in Sec. \ref{five}.

\section{\label{two} Preliminaries}
Before going into the details of our study, we present some preliminary concepts that will be useful in the main analyses.

\subsection{\label{nonlocal} Nonlocal quantum operations and resources}
Any possible transformation of an arbitrary quantum state is described by the action of quantum operations on it. If $D(\mathcal{H})$ denotes the set of density operators on a Hilbert space $\mathcal{H}$, then a quantum operation $\Phi$ is defined as a completely positive (CP) trace non-increasing map from $D(\mathcal{H})$ to $D(\mathcal{H})$ i.e., $\Phi : D(\mathcal{H})\mapsto  D(\mathcal{H})$. Let us denote the set consisting of all such quantum operations by $\mathcal{O}$.  When a multipartite quantum system in an arbitrary quantum state is distributed among multiple spatially separated parties, then the set of locally implementable quantum operations on the composite system by the spatially separated parties are limited due to the constraint of spatial separation. If each party locally implements quantum operations on their respective subsystem and communicates classical information between each other, then the set of operations implementable on the state of the composite system, called LOCC (denoted by $\mathcal{L}$) forms a strict subset of the whole set of quantum operations, i.e.,  $\mathcal{L}\subset \mathcal{O}$. Quantum operations that are not implementable by LOCC only are defined as nonlocal operations ($\mathcal{L}^c$). This inability to locally implement the full set of quantum operations restricts the spatially separated parties from accomplishing many quantum tasks.  To illustrate, consider the local state discrimination task as described below. Given a single copy of a set of orthogonal states, it is always possible to perfectly distinguish the states if the complete set of quantum operations are allowed to be  implemented on them. However, if only a restricted set of operations is allowed to be implemented on the states, then they cannot necessarily be distinguished perfectly. In particular, there exists certain sets of multipartite orthogonal states that are not distinguishable perfectly when spatially separated parties are limited to LOCC only \cite{PhysRevLett.87.277902,Bennet99}. Another such example of task which is impossible to execute locally is the transformation of product states shared between spatially separated parties to entangled states by LOCC \cite{RevModPhys.81.865}. As mentioned earlier, such tasks that cannot be accomplished by spatially separated parties under LOCC only are termed as nonlocal tasks.

Importantly, with the assistance of suitable additional resources these limitations for local implementation of nonlocal operations and execution of nonlocal tasks can be lifted \cite{PhysRevA.64.052309,PhysRevA.64.032302,PhysRevA.62.052317}. Consider, for example, two spatially separated parties Alice and Bob share a bipartite quantum state on which an arbitrary quantum operation has to be implemented. If Alice and Bob have sufficient number of entangled states shared between them in addition to the shared bipartite system, then they can locally implement any quantum operation by the following simple protocol. Alice will first teleport the state of her subsystem to Bob's laboratory so that Bob can have access to the full initially shared state, in his local laboratory. He will then implement any quantum operation from the set $\mathcal{O}$ on the composite system in his laboratory. Finally, after performing quantum operation, Bob will teleport the state of Alice's subsystem back to her laboratory. Hence, by teleporting back and forth, Alice and Bob will be able to implement the whole set of quantum operations locally \cite{PhysRevA.64.032302}. Here, the shared entangled states serve as resource for teleportation and, hence, for local implementation of the complete set of quantum operations. In case of multipartite (say, $n$ parties) systems, a similar protocol can be followed where all the $n-1$ parties will teleport the states of their respective subsystems to one party, say, the $n^{\text{th}}$ party, so that the $n^{\text{th}}$ party can implement any desirable quantum operation on the composite system and then teleport back the states of all the $(n-1)$ subsystems to the respective $(n-1)$ number of parties. In this paper, however, our main focus will be on bipartite systems. 

It may be noted here that for execution of all nonlocal tasks, the aforementioned  two-way teleportation is not always necessary. For example, in case of local state discrimination task, if Alice and Bob are able to perform quantum teleportation only in one direction (either from Alice to Bob, or from Bob to Alice), then they can distinguish the states from any set of bipartite orthogonal states. Here, after one-way teleportation, both the subsystems will be in a single laboratory so that complete measurement required for the state discrimination can be implemented on the joint state locally. However, there also exist tasks, like implementation of SWAP operation ($|\psi\rangle_A \otimes |\phi\rangle_B \mapsto |\phi\rangle_A \otimes |\psi\rangle_B$) where two-way quantum teleportation is necessary and sufficient \cite{PhysRevA.64.032302}. More generally, the specificity of the required resource depends on the task concerned  and the quantum state of the shared system on which the task is to be performed. This motivates us to define the notion of {\it universal} resource for local implementation of nonlocal quantum operations on shared bipartite systems.

\begin{definition}[Universal Resource] 
Consider Alice and Bob share a bipartite quantum state $\rho_{AB} \in D(\mathbb{C}_A^{d}\otimes\mathbb{C}_B^{d'})$. We call a resource \textbf{universal} for the said Hilbert space if Alice and Bob can implement the whole set of quantum operations on any such shared quantum state by LOCC with the aid of that resource only.
\end{definition}

For example, in case of $d \times d'$ dimensional bipartite quantum systems with $d\leq d'$, two maximally entangled states in $ \mathbb{C}^{d}\otimes\mathbb{C}^{d}$ serve as a universal resource. It trivially follows that if Alice and Bob can locally implement the complete set of quantum operations with the aid of a universal resource, then they will also be able to execute any nonlocal task. Therefore, the resource is also universal for local execution of any nonlocal task. 

\subsection{\label{sup} Superposition of space-time with different geometries}
According to general relativity, the geometry of space-time is uniquely determined by the distribution of matter-energy degrees of freedom in it as suggested by Einstein's field equations. This geometry regulates the configuration of light cones of every event in the space-time, i.e., the causal structure of the entire space-time. The light cone of an event $X$ specifies the set of events that are causally connected to it and also their causal orders with respect to it. Events that are within the light cone of $X$ are called timelike separated from it. Among these events, those which are in the forward light cone are said to be in the future of $X$ and those which are in the backward light cone are said to be in the past of $X$. Note that it is always possible to signal (for example, sending a light pulse) from an event in past to an event in future, but not the other way. For two timelike separated events $X_1$ and $X_2$, we denote their causal order as $X_i \rightarrow X_j$, $i,j \in \lbrace1,2\rbrace$, where $X_i$ is the event in past and $X_j$ is the event in future. In the classical description of gravity, causal order between any pair of events is definite and determined essentially by the classical matter-energy degrees of freedom. Thus, the causal structure of the space-time is albeit dynamic, classically deterministic in nature.

However, if the matter-energy degrees of freedom admits a quantum description, then the classical description of space-time may fall short. In particular, it can be anticipated that in such a scenario, geometry of the space-time may no longer be definite. For instance, if a gravitating mass exists in a quantum superposition of more than one distinct location, then space-time also exists in a quantum superposition of multiple different geometries. Such superposition of space-time geometries has got much attention in recent times, as evinced from a wide range of literature starting from quantum reference frame \cite{Giacomini19,Giacomini20} to exploring quantum signature of gravity \cite{Christodoulou:2018cmk, Bose17, Marletto17,Ruiz2017}. In fact, such superposition of the geometry of space-time can also lead to  entanglement between two distant gravitating masses, which is otherwise impossible in a classical space-time \cite{Christodoulou:2018cmk, Bose17, Marletto17}. Similar phenomena happen if gravitating matter-energy degree of freedom exists in a superposition of different energy eigenstates \cite{Ruiz2017}. As an immediate consequence of this superposition of the space-time geometry, one can have indefiniteness of causal structure of the space-time. Due to the superposition of geometries, the light cone of an event $X$ can also exist in a superposition of different configurations. Moreover, as mentioned in the introduction, an interesting special case of this type of indefiniteness is that the causal order between two events can also become indefinite. To put it specifically, in the classical scenario, depending on the geometry of space-time, two timelike separated events $X$ and $Y$ can have two distinct classically exclusive causal orders -- either (i) event $X$ is in the causal past of event $Y$ ($X \rightarrow Y$), or (ii) event $Y$ is in the causal past of $X$ ($Y \rightarrow X$). But if the two different geometries, each yielding one of the above two distinct causal orders between the events, exist in a superposition, then the causal orders also exist in a coherent superposition leading to indefinite causal order of the events.

For a tangible illustration, consider the following scenario \cite{zych2019bell}: two initially synchronised clocks (say, $A$ and $B$) are located in the gravitational field of a spherical mass $M$ along the same radial direction from the mass. Let us denote the radial distances of clock $A$ and clock $B$ from the mass by $r_A$ and $r_B$, respectively. Clock $A$ is situated closer to the mass and clock $B$ is located at a distance $h$ apart from clock $A$, i.e., $r_B=r_A+h$. As the clock $A$ is closer to the mass, due to gravitational time dilation, it ticks at a slower rate than the clock $B$. Since the two clocks were synchronised initially (at $t_A=t_B=0$), time read by clock $A$ ($t_A$) always lags behind the time read by clock $B$ ($t_B$). Let us define event $X$ at the location of clock $A$ when it reads the time $t_A=\tau^*$, and similarly event $Y$ at the location of clock $B$ when it reads $t_B=\tau^*$. For these two events to be causally connected, a light signal emitted from the location of clock $B$ at $t_B=\tau^*$ must reach the location of clock $A$ before $t_A=\tau^*$. 

The static and isotropic geometry of the space-time due to the spherical mass is described by the Schwarzschild metric:
\begin{align}
    ds^2 =& g_{tt}(r) c^2dt^2 + g_{rr}(r) dr^2 + g_{\theta \theta}(r) d\theta^2 +g_{\phi \phi}(r) d\phi^2 \nonumber\\
    =& -\left(1-\frac{2V(r)}{c^2}\right) c^2dt^2 + \frac{1}{\left(1-\frac{2V(r)}{c^2}\right)} dr^2 \nonumber \\
     &+ r^2 d\theta^2 +r^2\sin^2\theta d\phi^2,
\end{align}
where $V(r)=\frac{GM}{r}$ is the Newtonian gravitational potential due to the mass $M$ at a radial distance $r$ from it. Here, $\lbrace ct,r,\theta,\phi \rbrace$ is the spherical polar coordinate system used by an observer located in a faraway location, effectively free from the effects of the gravitational field. We also assume that both the clocks are located well outside the Schwarzschild radius ($R_s=\frac{2GM}{c^2}$) of the space-time, i.e., $r_A,r_B>R_s$. The coordinate time (measured by the far away observer) taken for the light signal to travel from $r_B$ to $r_A$ is
\begin{align}
T_c=\frac{1}{c} \int_{r_B}^{r_A} dr' \sqrt{-\frac{g_{rr}(r')}{g_{tt}(r')}}=\frac{1}{c} \int_{r_A+h}^{r_A} \frac{dr'}{\left(1-\frac{2V(r')}{c^2}\right)}.
\end{align}
The infinitesimal proper time interval at a distance $r$ from mass $M$ is $d\tau(r)=\sqrt{-g_{tt}(r)}dt$. Hence, the rates at which the clocks $A$ and $B$ tick are related as
\begin{align}
    dt_A=\sqrt{\frac{g_{tt}(r_A)}{g_{tt}(r_B)}}dt_B,
\end{align}
where $dt_{A/B}=d\tau(r_{A/B})$. Assuming that the initial synchronisation of the clocks matches with the coordinate time $t = 0$, the time at clock $A$ when the light signal reaches there is 
\begin{align}
t_A=\sqrt{-g_{tt}(r_A)}\left( \frac{\tau^*}{\sqrt{-g_{tt}(r_B)}}+T_c \right).
\end{align}
For the events $X$ and $Y$ to be causally connected, this time should precede the time $t_A=\tau^*$, i.e., 
\begin{align}
\label{tau*}
     \sqrt{-g_{tt}(r_A)}\left( \frac{\tau^*}{\sqrt{-g_{tt}(r_B)}}+T_c \right) \leq \tau^*, \nonumber\\
    \text{or,}~~~~~~\tau^*\geq \frac{\sqrt{-g_{tt}(r_A)}T_c}{1-\sqrt{\frac{g_{tt}(r_A)}{g_{tt}(r_A+h)}}}.
\end{align}
So, if we define the events $X$ and $Y$ such that the condition (\ref{tau*}) is satisfied, then the two events are causally connected and event $Y$ is in past of event $X$, i.e., $Y\rightarrow X$.

On the other hand, if we interchange the location of the two clocks, then clock $B$ will be closer to the mass $M$ and $r_A=r_B+h$. Now, due to gravitational time dilation clock $B$ will tick at a slower rate than clock $A$, i.e., time read by clock $B$ will always lag behind the time read by clock $A$. If we again define events $X$ and $Y$ at the locations of clocks $A$ and $B$ respectively, when they locally read the respective time $\tau^*$, then the events will be causally connected if $\tau^*$ satisfies Eq. (\ref{tau*}) with $r_A$ and $r_B$ being interchanged. In this case, event $Y$ will end up being in the future of event $X$, i.e., $X\rightarrow Y$. These are depicted in Fig. \ref{ICO}.

\begin{figure}[t!]
\includegraphics[height=5.7cm,width=9.2cm]{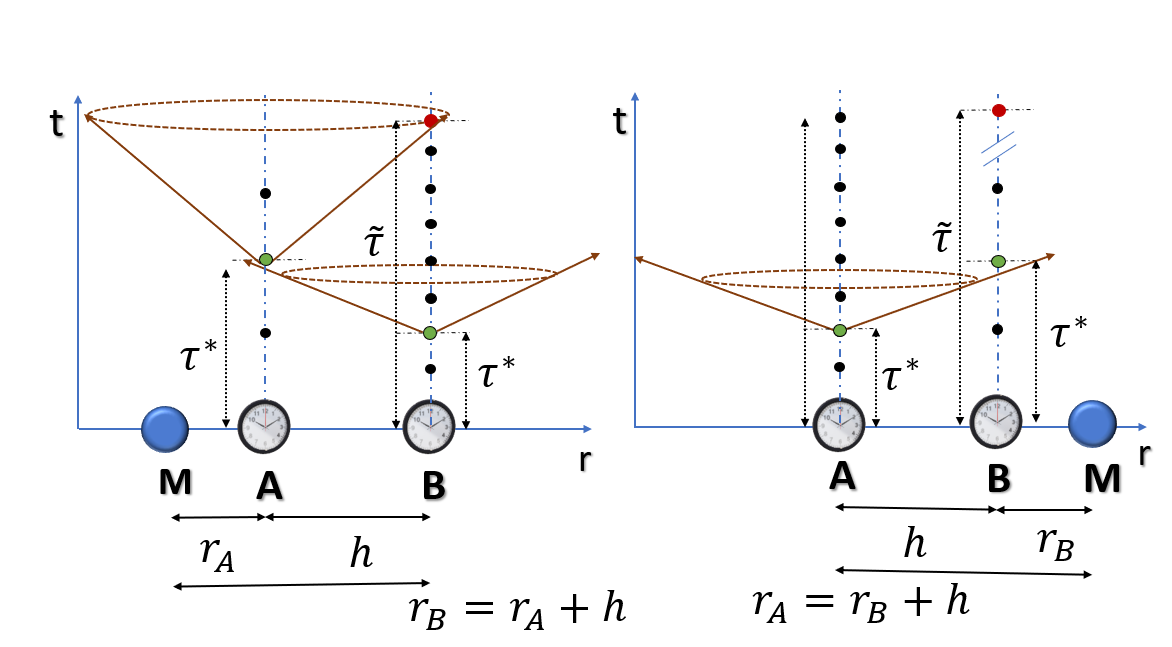}
\caption{\label{ICO} Two events $X$ and $Y$ (represented by green dots), defined at the locations of clock $A$ and $B$ as $t_A=\tau^*$ and $t_B=\tau^*$ respectively, have different causal orders depending on the position of gravitating mass $M$. The left picture is for the causal order where $Y$ is in the causal past of $X$ ($Y\rightarrow X$) and the right picture is for the causal order where $X$ is in the causal past of $Y$ ($X\rightarrow Y$). Event $Y'$ defined at the location of clock $B$ as $t_B=\tilde{\tau}$ (represented by red dot) is always in the causal future of $X$ ($X\rightarrow Y'$) in both the cases.}
\end{figure}

Now, let us consider the following scenario. Two clocks $A$ and $B$ are situated at a separation of distance $h$. A spherical mass $M$ exists in a spatial superposition of two locations -- one near the clock $A$, at a distance $R$ apart from it on the opposite side of clock $B$ and the other near clock $B$, at a distance $R$ apart from it on the opposite side of clock $A$ (see Fig. \ref{ICO1}). Note that here we are considering that the spherical mass is a quantum object, capable of existing in superposition of two different locations. As noted earlier, when the mass is located near the clock $A$, $r_A=R$, $r_B=R+h$ and events $X$ and $Y$ (defined as earlier) have the causal relation: $Y\rightarrow X$. We denote the quantum state of the mass for this location as $\ket{\mathcal{M}_{Y\rightarrow X}}$. On the other hand, when the mass is located near the clock $B$, $r_B=R$, $r_A=R+h$ and events $X$ and $Y$ have the causal relation: $X\rightarrow Y$. We denote the quantum state of the mass for this location as $\ket{\mathcal{M}_{X\rightarrow Y}}$. As the mass exists in a superposition of these two locations, its quantum state is given by $(\ket{\mathcal{M}_{X\rightarrow Y}}+\ket{\mathcal{M}_{Y\rightarrow X}})/\sqrt{2}$ and the causal order between $X$ and $Y$ becomes indefinite. Note that the states $\lbrace \ket{\mathcal{M}_{X\rightarrow Y}},\ket{\mathcal{M}_{Y\rightarrow X}} \rbrace$ yield observably distinct causal structures of space-time and are therefore orthogonal. Thus, fundamental indefiniteness of the causal order of events is a unique and natural feature of space-time in superposition of different geometries. It is noteworthy here that the quantum state $(\ket{\mathcal{M}_{X\rightarrow Y}}+\ket{\mathcal{M}_{Y\rightarrow X}})/\sqrt{2}$ of the mass (in superposition of two distinct location) yield indefinite causal order of not only two specific events, but of any pair of events defined as earlier, satisfying Eq. (\ref{tau*}) with $r_A=R$ and $r_B=R+h$.

However, the type of indefinite causal structure arising from the superposition of two geometries does not yield indefiniteness of causal order for more than two events. For example, if we consider three events -- $X$, $Y_1$, and $Y_2$ such that event $X$ is defined by some local time read by clock $A$ at its location and events $Y_1$ and $Y_2$ are defined by two different local times read by clock $B$ at its location, then we can get three distinct classically exclusive causal relations between them: $X\rightarrow Y_1\rightarrow Y_2$, $Y_1\rightarrow X\rightarrow Y_2$, and $Y_1\rightarrow Y_2\rightarrow X$ (causal order of $Y_1$ and $Y_2$ is always $Y_1\rightarrow Y_2$, as they are defined in the same location). These three causal relations arise from three different geometries of the space-time and, hence, indefinite causal order between the three events requires superposition of those three space-time geometries. So it is clear that the type of indefinite causal structure of space-time that allows indefiniteness of causal order between three events is different from the type of indefinite causal structure that allows indefiniteness of causal order between two events. Let us formalise this in the form of a definition:
\begin{definition}[$m$-ICS] Consider $n$ events $\lbrace X_k\rbrace_{k=1}^n$ in space-time. Now, depending on the space-time geometry, there can be at the most $n!$ distinct classically exclusive causal relations among the $n$ events. We denote the indefinite causal structure that allows $m$ (where $m \leq n!$) of these causal relations to exist in coherent superposition as $m$-Indefinite Causal Structure ($m$-ICS).
\end{definition}
Note that $m$-ICS arises from the superposition of $m$ distinct geometries, each yielding one of the $m$ causal orders for the $n$ events.

Another important point to mention here is that it is always possible to find a pair of events whose causal order is definite even in space-time with indefinite causal structure (discussed in details in Appendix \ref{appa}). This means that even in indefinite causal structure, it is possible to communicate definitely from a past event to a future event. This fact will be useful in subsequent sections of this paper.

\begin{figure}[t!]
\centering
\includegraphics[height=4cm,width=8.5cm]{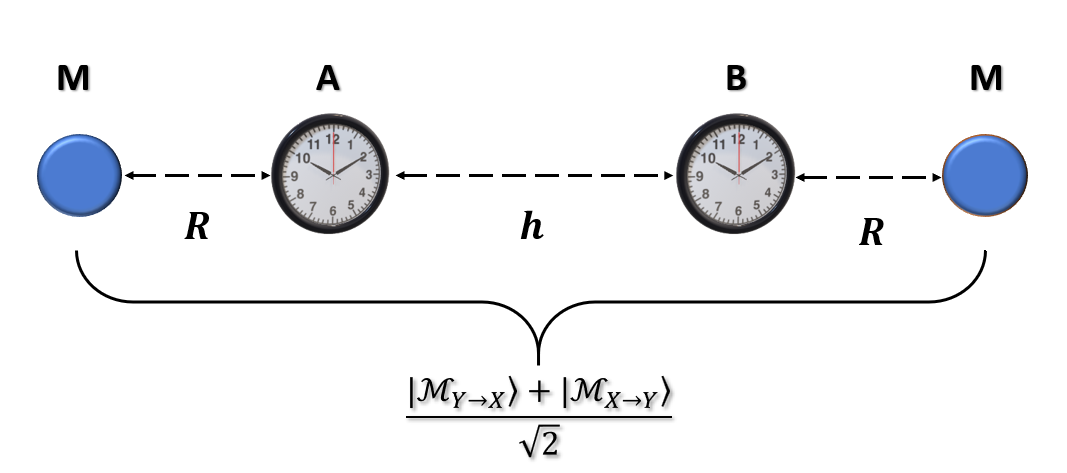}
\caption{\label{ICO1} Gravitating mass $M$ is in superposition of two distinct locations with states being denoted by $\ket{\mathcal{M}_{X\rightarrow Y}}$ and $\ket{\mathcal{M}_{Y\rightarrow X}}$. Two clocks $A$ and $B$ are situated in the region between the mass-locations, such that the distance between clock $A(B)$ and the mass-location with state $\ket{\mathcal{M}_{Y\rightarrow X}}(\ket{\mathcal{M}_{X\rightarrow Y}})$ is $R$ and that between the clocks $A$ and $B$ is $h$. Superposition of the two locations results in indefinite causal structure of space-time.}
\end{figure}

\section{\label{three} Protocol for exploiting indefinite causal structure as resource}
When two events are considered to be in classical space-time (with definite geometry), then one-way signaling (from the past event to the future event) is the only option between those two events. Instead, if two events are considered in space-time with indefinite causal structure, then two one-way signaling between the two timelike separated events exist in coherent superposition. This coherent superposition of two one-way signaling is quite different from the bidirectional signaling between two distant locations, where signaling from one location (say, $A$) to the other (say, $B$) takes place between two events $X$ and $Y$ defined at the locations of $A$ and $B$ respectively (such that $X\rightarrow Y$) and the back signaling from $B$ to $A$ takes place between a different set of events $Y$ and $X'$ defined at the locations of $B$ and $A$ respectively (such that $Y\rightarrow X'$) with $X'$ being an event which occurs at the same location of $X$ but after some sufficient time so that $Y\rightarrow X'$ happens.

The objective of the present paper is to assess whether two spatially separated parties can exploit the coherent superposition of two one-way signaling to locally implement larger set of quantum operations than LOCC on  their shared quantum system without using any  additional resource. To this end, let us design a protocol that spatially separated parties use to exploit the indefiniteness of causal structure as a resource.

\paragraph*{\textbf{Protocol:}}
Consider two parties Alice and Bob, spatially separated by a distance $h$, that share a bipartite quantum system in an arbitrary quantum state.  Alice and Bob have under their control not only their respective share of the system, but also additional ancillary systems. Importantly, there is no quantum correlation (like entanglement) between the ancillary systems of Alice and Bob that they can use as resource. Each of them has at their disposal two quantum operations $\lbrace \Phi_1,\Phi_2 \rbrace$ that they can locally implement on their respective subsystems  and/or ancillary systems. There is another agent Charlie, who is able to prepare mass configuration in spatial superposition of two locations -- one near Alice's laboratory, at a distance $R$ from it: ($\ket{\mathcal{M}_{Y\rightarrow X}}$), and the other near Bob's laboratory, at a distance $R$ from it: ($\ket{\mathcal{M}_{X\rightarrow Y}}$), i.e., in the state $(\ket{\mathcal{M}_{X\rightarrow Y}} + \ket{\mathcal{M}_{Y\rightarrow X}})/\sqrt{2}$. Furthermore, Charlie is also able to implement local quantum operations (including measurements) on the mass configuration and communicate classical information to both Alice and Bob. Alice and Bob will implement one quantum operation chosen from the above set at their respective local time $\tau^*$ (i.e., $t_A=\tau^*$ and $t_B=\tau^*$), where $\tau^*$ satisfies Eq. (\ref{tau*}) with $r_A=R$. Which particular quantum operation they will implement on their subsystem depends on the following strategy. If Alice (Bob) receives a light signal from Bob (Alice) before her (his) local time $\tau^*$, she (he) will implement operation $\Phi_2$ on  systems (shared subsystem and/or ancillary system) under her (his) control. Otherwise, Alice (Bob) will implement operation $\Phi_1$ on  systems under her (his) control and send a light signal (classical communication) to Bob (Alice)  (see Fig. \ref{schematic}). After implementing quantum operations by both the parties, Charlie will measure the mass configuration in the basis -- $\{ (\ket{\mathcal{M}_{X\rightarrow Y}} \pm \ket{\mathcal{M}_{Y\rightarrow X}})/\sqrt{2}\}$ -- and communicate his outcome to Alice and Bob. 

\begin{figure}
    \centering
    \includegraphics[height=240px, width=220px]{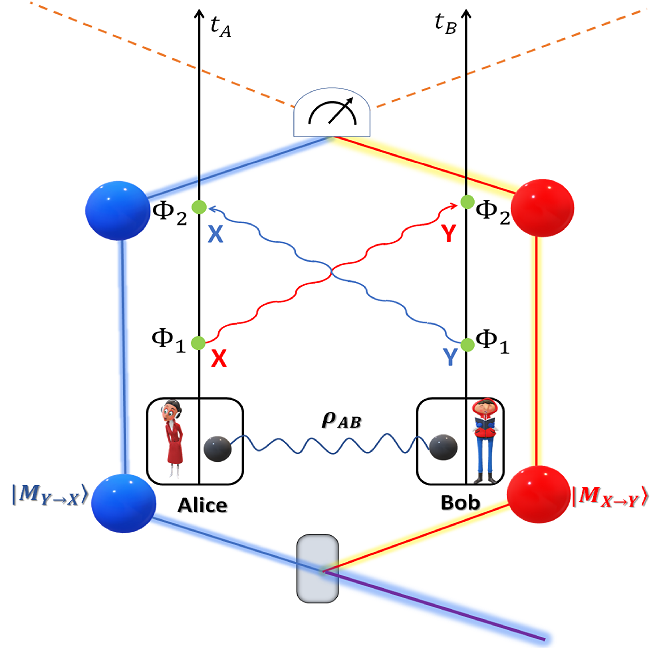}
    \caption{Schematics of the protocol for exploiting indefinite causal structure as resource for local implementation of non-local operations. A gravitating mass is prepared in a spatial superposition of two distinct locations with spatial states being denoted by $|\mathcal{M}_{X\rightarrow Y}\rangle$ (red) and $|\mathcal{M}_{Y\rightarrow X}\rangle$ (blue). Local proper time $t_A=\tau^*$ at Alice's laboratory and $t_B=\tau^*$ at Bob's laboratory define event $X$ and $Y$ respectively. Alice and Bob share an unknown bipartite quantum state. If the mass is in $|\mathcal{M}_{X\rightarrow Y}\rangle$ state, event $X$ lies in causal past of event $Y$ (represented in red). In that case, Alice implements local operation $\Phi_{1}$ at event $X$ on her subsystem and send signal to Bob, who implements operation $\Phi_{2}$ on his subsystem at event $Y$. Whereas, if the mass is in $|\mathcal{M}_{Y\rightarrow X}\rangle$ state, event $Y$ lies in causal past of event $X$ (represented in blue). In that case, Bob implements local operation $\Phi_{1}$ at event $Y$ on his subsystem and send signal to Alice, who implements operation $\Phi_{2}$ on her subsystem at event $Y$. Finally, the mass is measured in $\{ (\ket{\mathcal{M}_{X\rightarrow Y}} \pm \ket{\mathcal{M}_{Y\rightarrow X}})/\sqrt{2}\}$ basis, and the outcome is broadcast to any event at Alice's and Bob's laboratory that lie in the future light-cone of the event at which the measurement is done.}
    \label{schematic}
\end{figure}

Before proceeding further, let us review the roles of Alice, Bob, and Charlie and their allowed operations. As mentioned earlier, Alice and Bob are sharing a single copy of a bipartite quantum system whose state is unknown to them. They are allowed to implement only local quantum operations on the states of their respective subsystem and/or locally accessible ancillary systems. The particular local operations $\{\Phi_1,\Phi_2\}$ that they can choose are to be implemented only at the assigned times, i.e., at respective local times $t_A=\tau^*$ and $t_B=\tau^*$. Additionally, they can exchange classical information with each other. But they do not possess any other additional quantum resource that they can resort to. Charlie is located in a far-away separate laboratory and he is allowed to implement operations on the mass configuration only. These include creating spatial superposition of the mass, measuring and implementing unitary transformation on it. Importantly, even if Charlie's operations are nonlocal, he is restricted to access the mass configuration only and, in particular, he is not allowed to access Alice's and Bob's laboratories. Also, Charlie can communicate only classical information to Alice and Bob. 

However, since Charlie can create spatial superposition of the mass configuration, he is able to generate the resource of indefinite causal structure. Therefore, his allowed operations are ``resource-generating". This is not surprising because Charlie is able to implement nonlocal operations. Furthermore, Charlie's measurement in the $\{(\ket{\mathcal{M}_{X\rightarrow Y}} \pm \ket{\mathcal{M}_{Y\rightarrow X}})/\sqrt{2}\}$ basis collapses the mass configuration in a spatially superposed state that again gives rise to indefinite causal structure. Thus, once Charlie generates the resource by creating spatial superposition of the mass, his measurement preserves the amount of generated resource.

\section{Indefinite causal structure as a universal resource}\label{four} 

At first, we show that Alice and Bob are able to implement entangling operation on the shared quantum systems using the aforementioned protocol in the presence of indefinite causal structure, i.e., when Charlie prepares mass configuration in spatial superposition of two locations, which otherwise is not possible by LOCC alone. This can be shown as follows. Let Alice and Bob share a bipartite quantum system in product state, $\ket{\psi}_A\otimes\ket{\phi}_B \in \mathbb{C}^d_A \otimes \mathbb{C}^d_B $ and the quantum operations that they can locally implement on their subsystems be two \textit{unitary operations} $\lbrace U_1,U_2 \rbrace$. Now, when Alice's event $X$, defined by her local time $t_A=\tau^*$, is in the causal past of Bob's event $Y$, defined his local time $t_B=\tau^*$, she will not receive any signal from Bob. Therefore, she will apply $U_1$ on her subsystem $\ket{\psi}_A$ and will send a signal to Bob. Bob, upon receiving the signal from Alice, will apply $U_2$ on his subsystem $\ket{\phi}_B$. On the other hand, when event $Y$ is in the causal past of event $X$, then Bob will not receive any signal from Alice. Consequently, he will apply $U_1$ on his subsystem $\ket{\phi}_B$ and will send a signal to Alice who will then apply $U_2$ on her subsystem $\ket{\phi}_B$. Therefore, the joint state of Alice's and Bob's subsystems and the mass configuration of Charlie in the presence of indefinite causal structure can be represented as
\begin{small}
\begin{equation}
\label{two_event_entanglement}
\frac{1}{\sqrt{2}}\left( |\mathcal{M}_{X\rightarrow Y}\rangle U_1\ket{\psi}_A U_2\ket{\phi}_B + |\mathcal{M}_{Y\rightarrow X}\rangle U_2\ket{\psi}_A U_1\ket{\phi}_B\right).
\end{equation}
\end{small}
Finally, Charlie will perform a measurement on the mass configuration in $\left\lbrace \left( |\mathcal{M}_{X\rightarrow Y}\rangle \pm |\mathcal{M}_{Y\rightarrow X}\rangle \right)/\sqrt{2}\right\rbrace$ basis and communicate his outcome to Alice and Bob. Now, depending on the outcome of Charlie's measurement, the joint state of Alice and Bob will collapse on one of the following states,
\begin{equation}
\frac{1}{\sqrt{2}}\left(U_1\ket{\psi}_A U_2\ket{\phi}_B \pm U_2\ket{\psi}_A U_1\ket{\phi}_B\right),
\label{entanglementu1u2}
\end{equation}
which is an entangled state for suitable choices of $U_1$ and $U_2$. It is evident from Eq. \eqref{two_event_entanglement} that communicating the measurement outcome obtained by Charlie to Alice and Bob is necessary because tracing out Charlie's part will never lead to entanglement between Alice and Bob, rather a classical statistical mixture of the shared states (after the action of local unitaries) will be generated. Note here that the two unitary operators are not necessarily noncommuting in order to utilise indefinite causal order as useful resource for generating entanglement under LOCC.

Next, we show that any quantum operation (not only entangling operation) on an arbitrary bipartite state $ \rho_{AB} \in D(\mathbb{C}^2\otimes\mathbb{C}^d)$ can be implemented under LOCC in the presence of indefinite causal structure by exploiting the aforementioned protocol.

\begin{theorem}
$2$-ICS (indefinite causal structure yielding indefinite causal order of two events) is a \textbf{universal} resource for local implementation of the whole set of quantum operations $\mathcal{O}$ on an arbitrary bipartite state $ \rho_{AB} \in  D(\mathbb{C}^2\otimes\mathbb{C}^d)$ shared by two spatially separated parties.
\end{theorem}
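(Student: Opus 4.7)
The plan is to reduce universality to a one-way teleportation primitive built on the protocol of Sec.~\ref{three}. If Alice can perfectly teleport the state of her qubit $S_1$ to a Bob-side ancilla $S_3$ using $2$-ICS as the only resource, with the resource regenerated afterwards, then Bob can apply any $\Phi\in\mathcal{O}$ locally on $(S_3,S_2)$ and teleport the transformed register back to Alice, effectively implementing $\Phi$ on $\rho_{AB}$ under LOCC. Universality then follows immediately from the arbitrariness of $\Phi$. Two things must be verified: perfect teleportation in a single round, and catalytic preservation of the $2$-ICS resource so that the reverse round is also possible.

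To build the teleportation primitive I would specialise the scheme whose generic output is Eq.~\eqref{two_event_entanglement}. Let Alice hold qubit $S_1$ in an arbitrary state $\ket{\psi}_{S_1}$, possibly correlated with Bob's $S_2$, and let Bob prepare ancilla $S_3$ in $\ket{0}$. I would choose the local maps so that, when acting in the role of the ``first'' party (i.e.\ the one who does not receive a prior signal), the operation $\Phi_1$ coherently swaps the held qubit into the emitted light-signal register and resets the local qubit to $\ket{0}$, while when acting in the role of the ``second'' party, the operation $\Phi_2$ coherently swaps the incoming signal register into the local qubit. Evaluating the joint state on the two causal branches then gives, in the $X\rightarrow Y$ branch, $\ket{0}_{S_1}\otimes\ket{\psi}_{S_3}$, and in the $Y\rightarrow X$ branch, $\ket{\psi}_{S_1}\otimes\ket{0}_{S_3}$, each tensored with the appropriate mass state and with $S_2$ left untouched. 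Charlie's measurement in the $\{(\ket{\mathcal{M}_{X\rightarrow Y}}\pm\ket{\mathcal{M}_{Y\rightarrow X}})/\sqrt{2}\}$ basis coherently recombines these two branches, and broadcasting the $\pm$ outcome lets Alice and Bob apply a SWAP-compensating Pauli correction that collapses the superposition onto the desired teleported form $\ket{0}_{S_1}\otimes\ket{\psi}_{S_3}$, carrying along all correlations with $S_2$. Because Charlie's superposition-basis measurement leaves the mass in a balanced superposition of its two locations, $2$-ICS is catalytically regenerated and can be reused, so the same primitive can be run in the Bob-to-Alice direction after the operation $\Phi$ has been applied on $(S_3,S_2)$.

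The main obstacle I expect is the coherent recombination step: the classical light-signal exchanged between the parties is, by construction, emitted only in the ``first'' role and therefore carries a which-branch label that would, if naively traced out, decohere the superposition of Eq.~\eqref{two_event_entanglement} into the classical mixture already ruled out after Eq.~\eqref{entanglementu1u2}. Resolving this requires treating the signal carrier as a bona fide quantum degree of freedom whose which-branch information is either absorbed into the mass register or symmetrised so that Charlie's $\pm$-projection coherently erases it; alternatively, one can reformulate $\Phi_1,\Phi_2$ so that all branch-distinguishing information lives entirely in the mass register to begin with. Once this coherence-preservation issue is handled, checking that the two branches differ only by a $\pm$-labelled Pauli on $\mathbb{C}^2\otimes\mathbb{C}^d$ is a routine Kraus-operator computation, and the forward teleportation, local application of $\Phi$, and reverse teleportation compose into the desired local implementation of $\Phi\in\mathcal{O}$, proving universality of $2$-ICS on $D(\mathbb{C}^2\otimes\mathbb{C}^d)$.
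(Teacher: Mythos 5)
Your high-level plan coincides with the paper's: a one-way teleportation primitive built on the Sec.~\ref{three} protocol, local application of $\Phi$ by Bob, and a reverse round enabled by the fact that Charlie's $\pm$-basis measurement leaves the mass in a balanced superposition, so the $2$-ICS is catalytically regenerated. The gap lies in the primitive itself. You load the quantum state coherently onto the light signal (``$\Phi_1$ coherently swaps the held qubit into the emitted light-signal register''), i.e.\ you use the inter-laboratory signal as a quantum channel. That steps outside the resources the theorem allows: the signal in the protocol is explicitly classical communication, and its only role is to tell the receiving party which of the two \emph{local} unitaries to apply; if Alice may send quantum states to Bob the claim trivialises and the $2$-ICS is no longer doing the work. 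Moreover, even granting such a quantum signal, your two branches are $\ket{0}_{S_1}\ket{\psi}_{S_3}$ and $\ket{\psi}_{S_1}\ket{0}_{S_3}$, which differ by a SWAP, not by a Pauli. After Charlie's $\pm$ projection the (unnormalised) state is $\ket{0}\ket{\psi}\pm\ket{\psi}\ket{0}$, whose norm and structure depend nonlinearly on $\ket{\psi}$; there is no $\psi$-independent ``SWAP-compensating Pauli correction'' (nor any deterministic correction at all) that maps it back to $\ket{0}\ket{\psi}$ for every input, so the recombination step you defer as ``routine'' actually fails. You do flag the coherence problem and suggest putting all branch-distinguishing information into the mass register, which is indeed the right direction, but your chosen $\Phi_1,\Phi_2$ cannot be reformulated that way without abandoning them.

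The paper's proof avoids both issues by keeping everything strictly LOCC: with $U_1=\mathbb{I}$ and $U_2=\sigma_x$ acting on Alice's qubit $A$ and Bob's ancilla $B'$, the two causal branches of Eq.~\eqref{two_event_entanglement} differ only in which of $A$, $B'$ got flipped, with $B$ untouched. Charlie's $\pm$ measurement yields the entangled state \eqref{postmeasurementAB}; Alice then measures $A$ in the computational basis and Bob applies one of $\mathbb{I},\sigma_x,\sigma_z,i\sigma_y$ on $B'$ as in Table~\ref{tab1}, which transfers the full $\mathbb{C}^2\otimes\mathbb{C}^d$ state onto $B'B$ with unit fidelity — the branch mismatch is a $\psi$-independent Pauli precisely because the unknown amplitudes never enter the signal. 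To repair your argument you would need to replace your swap-into-the-signal primitive by this (or an equivalent) choice of local unitaries plus the measurement-and-correction step, and then add the short linearity argument extending the pure-state transfer to arbitrary mixed $\rho_{AB}$ before composing forward transfer, $\Phi$, and back transfer.
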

\begin{proof}
We begin by considering that the two spatially separated parties, Alice and Bob, share a bipartite quantum system in an  arbitrary pure quantum state given by,
\begin{align}
    \ket{\Psi}_{AB} = \sum_{i=0}^1\sum_{j=0}^{d-1} \alpha_{ij} \ket{i}_A \ket{j}_B \in \mathbb{C}^2\otimes\mathbb{C}^d,
\end{align}
where $\alpha_{ij}$ are, in general, complex numbers; $\sum_{i=0}^1\sum_{j=0}^{d-1} |\alpha_{ij}|^2=1$ and $\lbrace \ket{i}_A\rbrace_{i=0}^1$, and $\lbrace \ket{j}_B\rbrace_{j=0}^{d-1}$ are orthonormal bases in $\mathbb{C}^2$ and $\mathbb{C}^d$ respectively. In addition,
Bob possess another ancillary qubit $\ket{0}_{B'}$. Therefore, the joint state of the quantum system, shared between Alice and Bob and Bob's ancillary system is given by,
\begin{widetext}
\begin{align}
    \ket{\Psi}_{AB}\otimes\ket{0}_{B'} = \sum_{j=0}^{d-1} \left( \alpha_{0j} \ket{0}_A \ket{j}_B \ket{0}_{B'} + \alpha_{1j} \ket{1}_A \ket{j}_B \ket{0}_{B'} \right).
\end{align}
Now, Alice and Bob will perform the protocol (discussed earlier) with $U_1=\mathbb{I}$ and $U_2=\sigma_x$ on the subsystems $A$ and $B'$ respectively. After implementing the local unitary operations, the joint state of the shared quantum system, Bob's ancilla, and the mass configuration of Charlie can be written as
\begin{align}
\frac{1}{\sqrt{2}} &\left[ |\mathcal{M}_{X\rightarrow Y}\rangle \sum_{j=0}^{d-1} \left(\alpha_{0j} \ket{0}_A \ket{j}_B \ket{1}_{B'} + \alpha_{1j} \ket{1}_A \ket{j}_B \ket{1}_{B'}\right)\right.\left.+ |\mathcal{M}_{Y\rightarrow X}\rangle \sum_{j=0}^{d-1} \left( \alpha_{0j} \ket{1}_A \ket{j}_B \ket{0}_{B'} + \alpha_{1j} \ket{0}_A \ket{j}_B \ket{0}_{B'}\right)\right]
\end{align}
Charlie, after measuring the mass configuration in $\left\lbrace \left( |\mathcal{M}_{X\rightarrow Y}\rangle \pm |\mathcal{M}_{Y\rightarrow X}\rangle \right) /\sqrt{2}\right\rbrace$ basis communicates the outcome $``+"$ or $``-"$ to Bob classically. Accordingly, the joint state of the shared system and Bob's ancilla collapses into one of the following states: 
\begin{align}
\label{postmeasurementAB}
&\frac{1}{\sqrt{2}} \left[ \sum_{j=0}^{d-1} \left(\alpha_{0j} \ket{0}_A \ket{j}_B \ket{1}_{B'} + \alpha_{1j} \ket{1}_A \ket{j}_B \ket{1}_{B'}\right)\pm \sum_{j=0}^{d-1} \left( \alpha_{0j} \ket{1}_A \ket{j}_B \ket{0}_{B'} + \alpha_{1j} \ket{0}_A \ket{j}_B \ket{0}_{B'}\right)\right].
\end{align} 
\end{widetext}

\begin{table*}[t]
   { \centering
\begin{tabular}{||c|c|c|c|c||} 
 \hline
 Charlie's outcome & Alice's outcome & Bob's collapsed state & Bob's correction operation & Bob's final state \\
[0.5ex]
 \hline\hline
 $+$ & $0$ & $\sum\limits_{j=0}^{d-1} \left(\alpha_{0j}\ket{1}_{B'}\ket{j}_B+\alpha_{1j}\ket{0}_{B'}\ket{j}_B \right)$ & $\sigma_x\otimes\mathbb{I}$ &  \multirow{4}{*}{$\sum\limits_{i=0}^1\sum\limits_{j=0}^{d-1} \alpha_{ij} \ket{i}_{B'} \ket{j}_B$} \\
  \cline{1-4}
 $+$ & $1$ & $\sum\limits_{j=0}^{d-1} \left( \alpha_{0j}\ket{0}_{B'}\ket{j}_B+\alpha_{1j}\ket{1}_{B'}\ket{j}_B \right)$ & $\mathbb{I}\otimes\mathbb{I}$ &  \\ 
 \cline{1-4}
 $-$ & $0$ & $\sum\limits_{j=0}^{d-1} \left(\alpha_{0j}\ket{1}_{B'}\ket{j}_B-\alpha_{1j}\ket{0}_{B'}\ket{j}_B \right)$ & $i\sigma_y\otimes\mathbb{I}$ & \\ \cline{1-4}
 $-$ & $1$ & $\sum\limits_{j=0}^{d-1} \left(\alpha_{0j}\ket{0}_{B'}\ket{j}_B-\alpha_{1j}\ket{1}_{B'}\ket{j}_B \right)$ & $\sigma_z\otimes\mathbb{I}$ & \\
 \hline
\end{tabular}
\caption{Details of the teleportation protocol from Alice to Bob using $2$-ICS for locally implementing any quantum operation on an arbitrary pure quantum state $|\Psi_{AB} \rangle \in  \mathbb{C}^2\otimes\mathbb{C}^d$ shared by Alice and Bob.}
    \label{tab1} }
\end{table*}

After the completion of the above protocol, Alice measures her qubit in $\lbrace |0\rangle_A,|1\rangle_A \rbrace$ basis and communicates her outcome $k \in \lbrace 0,1\rbrace$ classically to Bob. Here, it might be noted that in order to communicate Alice's outcome to Bob, it is necessary to have an event defined in Alice's laboratory which is definitely in the causal past of an event defined in Bob's laboratory even when the causal structure of the space-time is indefinite. The possibility of such events are discussed in Appendix \ref{appa}. Bob then accordingly makes a correction operation on his ancillary qubit to get $\sum_{i=0}^1\sum_{j=0}^{d-1} \alpha_{ij} \ket{i}_{B'} \ket{j}_B$ as mentioned in the Table \ref{tab1}. 

This completes the teleportation process from Alice to Bob, i.e., the joint state of the shared system between Alice and Bob, and Bob's ancillary qubit has transformed as $\ket{\Psi}_{AB}\otimes\ket{0}_{B'}\mapsto \ket{0}_{A}\otimes\ket{\Psi}_{B'B}$ or, $\ket{\Psi}_{AB}\otimes\ket{0}_{B'}\mapsto \ket{1}_{A}\otimes\ket{\Psi}_{B'B}$. 

Since, it is possible to create an arbitrary pure state initially shared by Alice-Bob in Bob's local laboratory with unit  fidelity, it is also possible to create in Bob's laboratory any mixed state $\rho_{AB} = \sum_l p_l \ket{\Psi^l}\bra{\Psi^l}_{AB} \in D(\mathbb{C}^2 \otimes \mathbb{C}^d)$ (where $0 \leq p_l \leq 1$; $\sum_l p_l = 1$, $\{\ket{\Psi^l} \}$ is an orthonormal basis in $\mathbb{C}^2 \otimes \mathbb{C}^d$) initially shared by Alice-Bob with unit  fidelity using this protocol:
\begin{align}
    \rho_{AB}\otimes \ket{0}\bra{0}_{B'} &= \left(\sum_l p_l \ket{\Psi^l}\bra{\Psi^l}_{AB}\right) \otimes \ket{0}\bra{0}_{B'} \nonumber\\
    &\mapsto \ket{\xi}\bra{\xi}_{A} \otimes\left(\sum_l p_l \ket{\Psi^l}\bra{\Psi^l}_{B'B}\right), \nonumber \\
    & = \ket{\xi}\bra{\xi}_{A} \otimes\rho_{B' B}
\end{align} 
where $\ket{\xi} \in\lbrace\ket{0},\ket{1}\rbrace$. 

Therefore, as a consequence of the above protocol, Bob can locally implement any quantum operation from the set $\mathcal{O}$ on $\rho_{B'B}$, transforming it to $\sigma_{B'B}$ (say). 

As noted earlier, after Charlie's measurement, the postmeasurement state of the mass configuration is either $\left( |\mathcal{M}_{X\rightarrow Y}\rangle + |\mathcal{M}_{Y\rightarrow X}\rangle \right)/\sqrt{2}$ or, $\left( |\mathcal{M}_{X\rightarrow Y}\rangle - |\mathcal{M}_{Y\rightarrow X}\rangle \right)/\sqrt{2}$, both of which give rise to $2$-ICS and, hence, can be used again as a resource to teleport the transformed state of Bob's ancillary qubit to Alice's laboratory in such a way that the joint state $\sigma_{B'B}$ belonging to Bob's laboratory now becomes the shared state between Alice and Bob. To elaborate it specifically, for simplicity, we may consider that Charlie resets the mass configuration in the state: $\left( |\mathcal{M}_{X\rightarrow Y}\rangle + |\mathcal{M}_{Y\rightarrow X}\rangle \right)/\sqrt{2}$ by local unitary operation, and Alice also resets her qubit  to $\ket{0}_A$ (i.e., $\ket{\xi}_A \mapsto \ket{0}_A$) by local unitary operation. Now, by following the same protocol as mentioned earlier with a different pair of events and interchanging Alice's and Bob's roles, it is possible to realise this back-teleportation:
\begin{align}
    \ket{0}\bra{0}_A\otimes\sigma_{B'B} \mapsto \sigma_{AB}\otimes \ket{\kappa}\bra{\kappa}_{B'},
\end{align}
where $\ket{\kappa} \in \lbrace \ket{0},\ket{1} \rbrace$.

In this way, the two spatially separated parties can locally implement any  quantum operation $\Phi \in \mathcal{O}$ on their shared system transforming $\rho_{AB}\mapsto \sigma_{AB}$. This completes the proof.
\end{proof}

\begin{remark}
\label{independentresource}
In the above teleportation protocol, $2$-ICS serves as an independent resource from maximally entangled two-qubit states that act as a necessary resource for teleportation of qubits following the standard protocol \cite{Bennett1993}.
\end{remark}

In other words, it is not the case that the $2$-ICS is used in our protocol to generate a maximally entangled two-qubit state (as in Eq.(\ref{entanglementu1u2}) with some specific $U_1$ and $U_2$) such that this entanglement can be used  as a resource to teleport a third qubit from one location to another. Importantly, no maximally entangled state appears as an intermediate step in our proposed teleportation protocol with unit fidelity.

The fact that $2$-ICS can be used as a resource to teleport qubits opens up the possibilities for local execution of other nonlocal tasks involving more than two spatially separated parties as well, using $2$-ICS as resource.

\begin{corollary}\label{corollary1}
The set of orthogonal product states in $\mathbb{C}^2\otimes\mathbb{C}^2\otimes\mathbb{C}^d$ (introduced in \cite{Bennet99}) exhibiting ``quantum nonlocality without entanglement'' can be perfectly distinguished locally with the aid of $2$-ICS as resource.
\end{corollary}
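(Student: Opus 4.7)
The plan is to use the teleportation protocol of Theorem~1 as a subroutine to reduce the three-party local discrimination problem to an effectively bipartite one. Label the three spatially separated parties holding subsystems in $\mathbb{C}^2\otimes\mathbb{C}^2\otimes\mathbb{C}^d$ as $A$, $B$, and $D$, and suppose $A$ and $B$ are the two parties with access to $2$-ICS (generated by the mass-configuration agent of Sec.~\ref{three}). Since every state in the set has the tripartite product form $\ket{\alpha_i}_A\ket{\beta_i}_B\ket{\gamma_i}_D$, party $D$'s share is uncorrelated with the $(A,B)$ pair. Because the protocol of Theorem~1 touches only $A$'s qubit, a fresh ancilla $B'$ in $B$'s laboratory, and the mass, it can be applied verbatim in the tripartite setting to teleport $\ket{\alpha_i}_A$ into $B'$ while leaving $\ket{\gamma_i}_D$ intact. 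After this single invocation of $2$-ICS, $B$ locally holds the two-qubit state $\ket{\alpha_i}_{B'}\ket{\beta_i}_B\in\mathbb{C}^4$ and $D$ still holds $\ket{\gamma_i}_D$.

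What remains is a \emph{bipartite} LOCC discrimination between $B$ (holding $\mathbb{C}^4$) and $D$ (holding $\mathbb{C}^d$) for the orthogonal product set $\{\ket{\alpha_i\beta_i}_{B'B}\otimes\ket{\gamma_i}_D\}$. Because $B$ can now perform arbitrary joint measurements on his two qubits, the three-party obstruction responsible for ``nonlocality without entanglement'' under $A|B|D$ LOCC is dissolved. A short two-round LOCC protocol between $B$ and $D$ then suffices: $B$ first performs a coarse-grained projective measurement on $\mathbb{C}^4$ that partitions the states into orthogonal groups (the group structure being inherited from the partition of the $\ket{\gamma_i}$'s on $D$'s side into mutually orthogonal subsets), communicates the outcome classically to $D$, and $D$ completes the identification by measuring in the basis distinguishing the remaining orthogonal states on its side. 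For the canonical Bennett~\emph{et al.} construction in $\mathbb{C}^2\otimes\mathbb{C}^2\otimes\mathbb{C}^2$ (the Shifts set), the pairing is explicit: the four vectors on $B$'s side decompose into two mutually orthogonal two-dimensional subspaces, and the $\ket{\gamma_i}$'s corresponding to each subspace are orthogonal (one pair sitting in the $\{\ket{+},\ket{-}\}$ basis and the other in the $\{\ket{0},\ket{1}\}$ basis).

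The main conceptual content of the proof is the first step: recognising that Theorem~1 can be invoked as a bipartite subroutine within a tripartite setting without disturbing the third party's share, so that a single use of $2$-ICS suffices to collocate two of the three subsystems. I expect the remaining bipartite LOCC verification to be routine rather than the main obstacle, since once $A$ and $B$ are brought together the specific Bennett~\emph{et al.} NLWE construction admits an explicit one- or two-round LOCC protocol of the sort just described; the substantive work has been done by the universality of $2$-ICS established in Theorem~1.
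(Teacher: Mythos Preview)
Your overall strategy matches the paper's: use the Theorem~1 teleportation as a bipartite subroutine inside the tripartite setting, thereby collapsing the three-party problem to a two-party LOCC discrimination of orthogonal product states. The paper, however, makes a different (and cleaner) choice of destination. It teleports $A$'s qubit not to the other qubit party $B$, but to the $\mathbb{C}^d$ party (call it $C$), who keeps an ancillary qubit $C'$. The remaining bipartition is then $B\,|\,CC'$, i.e.\ $\mathbb{C}^2\otimes\mathbb{C}^{2d}$, and the paper finishes in one line by invoking the known fact that \emph{every} set of orthogonal product states in $\mathbb{C}^2\otimes\mathbb{C}^{n}$ is perfectly LOCC-distinguishable \cite{Bennett99(1),DiVincenzo03}. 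No set-specific analysis is required.

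Your route, teleporting $A$ to $B$, lands in $\mathbb{C}^4\otimes\mathbb{C}^d$. For $d=2$ this is again $\mathbb{C}^2\otimes\mathbb{C}^4$ and the same theorem would apply; your explicit two-subspace decomposition for the Shifts set is also correct. But for $d\geq 3$ there is no general guarantee---LOCC-indistinguishable orthogonal product sets do exist in $\mathbb{C}^m\otimes\mathbb{C}^n$ once both factors exceed $2$---so the sentence ``the three-party obstruction \dots\ is dissolved'' does not by itself justify bipartite LOCC distinguishability, and the final step remains set-dependent rather than uniform in $d$. The paper's choice of teleportation target sidesteps this by ensuring that one side of the residual bipartition is always a single qubit.
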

\begin{proof}
Let us consider that three spatially separated parties, say, Alice, Bob and Charlie share the product state $|\psi\rangle_{ABC} = |\psi_1\rangle_A \otimes |\psi_2\rangle_B \otimes |\psi_3\rangle_C \in \mathbb{C}^2\otimes\mathbb{C}^2\otimes\mathbb{C}^d$. Let us also assume that Charlie has an ancillary qubit $|0\rangle_{C'}$. Now, following our proposed teleportation  protocol from Alice to Charlie, the following transformation is possible:
\begin{align}
    |\psi\rangle_{ABC} \otimes &|0\rangle_{C'} \mapsto |\xi\rangle_A \otimes |\psi\rangle_{BCC'},\nonumber \\
    &\text{where} \, \, |\psi\rangle_{ABC} = |\psi_1\rangle_A \otimes |\psi_2\rangle_B \otimes |\psi_3\rangle_C \nonumber \\
    &\text{and} \, \, |\psi\rangle_{BCC'} =  |\psi_2\rangle_B \otimes |\psi_3\rangle_C \otimes |\psi_1\rangle_{C'}.
\end{align}
Hence, starting from a set of orthogonal product states in $\mathbb{C}^2\otimes\mathbb{C}^2\otimes\mathbb{C}^d$ shared between Alice, Bob and Charlie, Bob and Charlie will end up in sharing a set of orthogonal product states in $\mathbb{C}^2\otimes\mathbb{C}^{d+2}$. Therefore, the given task becomes the task of locally distinguishing a set of orthogonal bipartite product states belonging in $ \mathbb{C}^2\otimes\mathbb{C}^{d+2}$, which is possible as any set of orthogonal product states in $\mathbb{C}^2\otimes\mathbb{C}^{d'}$ can always be locally discriminated without any resource \cite{Bennett99(1),DiVincenzo03}. Note that here only two parties are required to access the resource of indefinite causal structure.
\end{proof}

The status of $2$-ICS as a universal resource for local implementation of nonlocal quantum operations on $\mathbb{C}^2\otimes\mathbb{C}^d$ systems has been sufficiently established by showing that it can be used as resource for back and forth perfect teleportation  between two distant laboratories. However, teleportation is not always necessary for execution of nonlocal tasks using indefinite causal structure. We illustrate this in the following example by showing that even without invoking teleportation two spatially separated parties can locally distinguish the four Bell states using $2$-ICS as resource.

Consider the set of four Bell states:
\begin{align}
    |\mathcal{B}_1\rangle &= \frac{1}{\sqrt{2}}(|0\rangle_A|0\rangle_B+|1\rangle_A|1\rangle_B), \nonumber \\
    |\mathcal{B}_2\rangle &= \frac{1}{\sqrt{2}}(|0\rangle_A|0\rangle_B-|1\rangle_A|1\rangle_B), \nonumber \\
    |\mathcal{B}_3\rangle &= \frac{1}{\sqrt{2}}(|0\rangle_A|1\rangle_B+|1\rangle_A|0\rangle_B), \nonumber \\
    |\mathcal{B}_4\rangle &= \frac{1}{\sqrt{2}}(|0\rangle_A|1\rangle_B-|1\rangle_A|0\rangle_B).\nonumber
\end{align}
Suppose a state is randomly chosen from the above set and distributed between two spatially separated parties, Alice and Bob. Their task is to perfectly identify which particular state is given to them. It is not possible for them to locally discriminate the given state by performing LOCC without the aid of any other resource \cite{PhysRevLett.87.277902}. However, if an additional maximally entangled two-qubit state is distributed between them as a resource, then the task can be accomplished perfectly. This resource has been shown to be both necessary as well as sufficient for the perfect execution of the task \cite{Bandyopadhyay2015ieee,HorodeckiSen2003}. Here we show that $2$-ICS is an equally useful resource for the task of perfect local discrimination of four Bell states.

\begin{prop}\label{propo1}
Perfect local discrimination of the set of four Bell states: $\left\lbrace |\mathcal{B}_i\rangle\right\rbrace_{i=1}^4$, shared between two spatially separated laboratories is possible by LOCC when $2$-ICS is used as the sole resource.
\end{prop}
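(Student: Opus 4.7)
The plan is to give an explicit LOCC protocol that uses a single round of the 2-ICS procedure of Sec.~\ref{three} (together with ordinary classical communication and single-qubit measurements) to decode the one bit of information about the shared Bell state that is inaccessible to LOCC alone. Recall that with no extra resource two spatially separated parties can already read off the ``parity'' bit of a Bell state --- whether they share an element of $\{|\mathcal{B}_1\rangle,|\mathcal{B}_2\rangle\}$ or of $\{|\mathcal{B}_3\rangle,|\mathcal{B}_4\rangle\}$ --- by measuring in the computational basis and classically comparing outcomes; the obstruction is precisely the relative-phase bit. I would arrange the ICS protocol so that Charlie's measurement outcome supplies exactly this missing bit.

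Concretely, I would instantiate the protocol with $U_1=\mathbb{I}$ and $U_2=\sigma_x$ acting on Alice's and Bob's qubits in the two branches. Plugging into the branch structure of Eq.~\eqref{two_event_entanglement}, a short calculation --- using the fact that the SWAP operator $S$ satisfies $S(U_1\otimes U_2)S=U_2\otimes U_1$ and that each $|\mathcal{B}_i\rangle$ is a $\pm1$-eigenvector of $S$ --- shows that $(\mathbb{I}\otimes\sigma_x)|\mathcal{B}_i\rangle$ and $(\sigma_x\otimes\mathbb{I})|\mathcal{B}_i\rangle$ coincide for $i=1,3$ and differ by an overall sign for $i=2,4$. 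Hence Charlie's measurement of the mass in the $\{(|\mathcal{M}_{X\rightarrow Y}\rangle\pm|\mathcal{M}_{Y\rightarrow X}\rangle)/\sqrt{2}\}$ basis is deterministic: outcome ``$+$'' is obtained with certainty when the input is $|\mathcal{B}_1\rangle$ or $|\mathcal{B}_3\rangle$, and outcome ``$-$'' is obtained with certainty when the input is $|\mathcal{B}_2\rangle$ or $|\mathcal{B}_4\rangle$. The classical broadcast of this outcome already partitions the four possibilities into two pairs.

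Next I would observe that the post-measurement state remains a Bell state: one finds $|\mathcal{B}_1\rangle\leftrightarrow|\mathcal{B}_3\rangle$ in the ``$+$'' branch and $|\mathcal{B}_2\rangle\leftrightarrow|\mathcal{B}_4\rangle$ in the ``$-$'' branch. In every case the two candidate post-measurement Bell states have opposite computational-basis parity. Therefore a final round of LOCC in which Alice and Bob each measure their qubit in $\{|0\rangle,|1\rangle\}$ and exchange the outcome extracts the parity bit, which together with Charlie's broadcast singles out the input Bell state without ambiguity. Importantly, no ancilla, no teleportation, and no shared entanglement ever enter the protocol; only the 2-ICS, local operations, and classical communication are used, which is exactly what the proposition claims.

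The only genuinely non-routine step is the choice of $U_1,U_2$: one needs local unitaries for which the coherent sum $(U_1\otimes U_2)+(U_2\otimes U_1)$ and the coherent difference $(U_1\otimes U_2)-(U_2\otimes U_1)$ annihilate complementary subsets of the Bell basis, so that the ICS interference at Charlie's detector cleanly encodes the relative-phase bit. The symmetric/antisymmetric structure of the Bell states under SWAP makes $\{U_1=\mathbb{I},\,U_2=\sigma_x\}$ an essentially canonical choice; once this is identified, verifying the four post-measurement states is the short case check sketched above, and the proposition follows.
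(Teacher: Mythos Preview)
Your proposal is correct and follows essentially the same protocol as the paper: both use $U_1=\mathbb{I}$, $U_2=\sigma_x$, read off the relative-phase bit from Charlie's $\pm$ outcome, and then extract the parity bit by local computational-basis measurements with classical comparison. Your SWAP-symmetry explanation is a clean conceptual justification for the case-by-case calculation the paper carries out explicitly (note, though, that the sign is really $s_i s_j$ with $s_j$ the SWAP eigenvalue of $(\mathbb{I}\otimes\sigma_x)|\mathcal{B}_i\rangle$, since only $|\mathcal{B}_4\rangle$ is antisymmetric; your stated conclusion is nonetheless correct).
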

\begin{proof}
The proof is straightforward.
Alice and Bob will follow the earlier discussed protocol with local unitary operations $U_1=\mathbb{I}$ and $U_2=\sigma_x$  on their respective qubit of the shared Bell state. Note here that no ancillary qubit is required in this task. Now, depending on the given state between Alice and Bob, the joint state of Alice's and Bob's subsystems and the mass configuration of Charlie transforms as
\begin{widetext}
\begin{small}
\begin{align}
    \frac{1}{\sqrt{2}} \left( |\mathcal{M}_{X\rightarrow Y}\rangle + |\mathcal{M}_{Y\rightarrow X}\rangle \right) \otimes |\mathcal{B}_1\rangle &\rightarrow \frac{1}{\sqrt{2}} \left( |\mathcal{M}_{X\rightarrow Y}\rangle \otimes\frac{1}{\sqrt{2}}(|0\rangle_A|1\rangle_B+|1\rangle_A|0\rangle_B) + |\mathcal{M}_{Y\rightarrow X}\rangle \otimes \frac{1}{\sqrt{2}}(|1\rangle_A|0\rangle_B+|0\rangle_A|1\rangle_B) \right)  \nonumber\\
    &=\frac{1}{\sqrt{2}} \left( |\mathcal{M}_{X\rightarrow Y}\rangle + |\mathcal{M}_{Y\rightarrow X}\rangle \right) \otimes |\mathcal{B}_3\rangle, \nonumber\\
    \frac{1}{\sqrt{2}} \left( |\mathcal{M}_{X\rightarrow Y}\rangle + |\mathcal{M}_{Y\rightarrow X}\rangle \right) \otimes |\mathcal{B}_2\rangle &\rightarrow \frac{1}{\sqrt{2}} \left( |\mathcal{M}_{X\rightarrow Y}\rangle \otimes \frac{1}{\sqrt{2}}(|0\rangle_A|1\rangle_B-|1\rangle_A|0\rangle_B) + |\mathcal{M}_{Y\rightarrow X}\rangle\otimes \frac{1}{\sqrt{2}}(|1\rangle_A|0\rangle_B-|0\rangle_A|1\rangle_B) \right)  \nonumber\\
    &=\frac{1}{\sqrt{2}} \left( |\mathcal{M}_{X\rightarrow Y}\rangle - |\mathcal{M}_{Y\rightarrow X}\rangle \right) \otimes |\mathcal{B}_4\rangle, \nonumber\\
    \frac{1}{\sqrt{2}} \left( |\mathcal{M}_{X\rightarrow Y}\rangle + |\mathcal{M}_{Y\rightarrow X}\rangle \right) \otimes |\mathcal{B}_3\rangle &\rightarrow \frac{1}{\sqrt{2}} \left( |\mathcal{M}_{X\rightarrow Y}\rangle \otimes \frac{1}{\sqrt{2}}(|0\rangle_A|0\rangle_B+|1\rangle_A|1\rangle_B) + |\mathcal{M}_{Y\rightarrow X}\rangle \otimes \frac{1}{\sqrt{2}}(|1\rangle_A|1\rangle_B)+|0\rangle_A|0\rangle_B) \right) \nonumber\\
    &=\frac{1}{\sqrt{2}} \left( |\mathcal{M}_{X\rightarrow Y}\rangle + |\mathcal{M}_{Y\rightarrow X}\rangle \right) \otimes |\mathcal{B}_1\rangle, \nonumber\\
    \frac{1}{\sqrt{2}} \left( |\mathcal{M}_{X\rightarrow Y}\rangle + |\mathcal{M}_{Y\rightarrow X}\rangle \right) \otimes |\mathcal{B}_4\rangle &\rightarrow \frac{1}{\sqrt{2}} \left( |\mathcal{M}_{X\rightarrow Y}\rangle\otimes \frac{1}{\sqrt{2}}(|0\rangle_A|0\rangle_B-|1\rangle_A|1\rangle_B) + |\mathcal{M}_{Y\rightarrow X}\rangle \otimes \frac{1}{\sqrt{2}}(|1\rangle_A|1\rangle_B-|0\rangle_A|0\rangle_B)\right)  \nonumber\\
     &=\frac{1}{\sqrt{2}} \left( |\mathcal{M}_{X\rightarrow Y}\rangle - |\mathcal{M}_{Y\rightarrow X}\rangle \right) \otimes |\mathcal{B}_2\rangle.
\end{align}
\end{small}
\end{widetext}
Thereafter, Charlie will measure the mass configuration in the basis: $\left\lbrace \left( |\mathcal{M}_{X\rightarrow Y}\rangle \pm |\mathcal{M}_{Y\rightarrow X}\rangle \right)/\sqrt{2}\right\rbrace$ and communicate his outcome to Alice and Bob. Alice and Bob will then measure their respective qubits in the basis: $\lbrace |0\rangle, |1\rangle \rbrace$ and communicate their outcomes to each other. For Alice to communicate her outcome to Bob, an event defined at her laboratory must be definitely in the causal past of an event defined at Bob's laboratory, even in space-time with indefinite causal structure. Similarly, for Bob to communicate his outcome to Alice, an event defined at his laboratory must be definitely in the causal past of an event defined at Alice's laboratory. As mentioned earlier, this is always possible (See Appendix \ref{appa}). From all the information gathered, Alice and Bob can determine which state is shared between them as mentioned in the Table \ref{tab2}. This ends the proof.
\end{proof}

Note that the successful execution of this nonlocal task using indefinite causal structure as resource does not require any kind of teleportation from one laboratory to the other. Furthermore, no new entangled state is generated at any step of the above protocol. This is crucial as it clearly marks the independence of indefinite causal structure as a resource from entanglement, as already mentioned in Remark \ref{independentresource}.

As a corollary of the above proposition, another important result follows, illustrating the scope of indefinite causal structure as resource for nonlocal task involving more than two spatially separated parties.
\begin{corollary}[Unlocking of Smolin state]
Consider four parties, Alice, Bob, Dan, and Emma share a state, $\rho_{ABDE}=\frac{1}{4}\sum_{i=1}^4 |\mathcal{B}_i\rangle\langle \mathcal{B}_i|_{AB}\otimes|\mathcal{B}_i\rangle\langle \mathcal{B}_i|_{DE}$ \cite{Smolin2001}. They can distill entanglement from this bound entangled (Smolin) state under LOCC using $2$-ICS as a sole resource.
\end{corollary}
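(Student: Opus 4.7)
The plan is to reduce unlocking of the Smolin state directly to Proposition \ref{propo1}, exploiting the Bell-state correlations built into $\rho_{ABDE}$. The key structural observation is that $\rho_{ABDE}=\frac{1}{4}\sum_{i=1}^4 |\mathcal{B}_i\rangle\langle \mathcal{B}_i|_{AB}\otimes|\mathcal{B}_i\rangle\langle \mathcal{B}_i|_{DE}$ is a classically correlated mixture in which the Bell-state label on $AB$ is perfectly tied to the Bell-state label on $DE$. Thus, if one pair of parties can \emph{identify} which Bell state they hold, the other pair automatically ends up sharing the same maximally entangled Bell state.

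First, I would partition the four parties into two spatially separated pairs -- say $\{A,B\}$ and $\{D,E\}$ -- and have Dan and Emma be the pair that accesses $2$-ICS as the sole resource (with a helper Charlie preparing the mass configuration in superposition near $D$ and $E$, exactly as in Sec.~\ref{three}). Alice and Bob, by contrast, need no resource at all in this step. Next, I would invoke Proposition \ref{propo1} verbatim: Dan and Emma apply the protocol of Proposition \ref{propo1} with $U_1=\mathbb{I}$, $U_2=\sigma_x$ on the $DE$ qubits, Charlie measures the mass in the $\{(|\mathcal{M}_{X\to Y}\rangle\pm|\mathcal{M}_{Y\to X}\rangle)/\sqrt{2}\}$ basis, and Dan/Emma then perform local computational-basis measurements and exchange classical communication. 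By Proposition \ref{propo1} this yields, with certainty, the index $i\in\{1,2,3,4\}$ such that their share was $|\mathcal{B}_i\rangle_{DE}$.

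Then Dan (or Emma) broadcasts the identified index $i$ to Alice and Bob over a classical channel. Because of the perfect index correlation in the Smolin decomposition, conditioned on outcome $i$ the post-protocol state held by Alice and Bob is exactly the maximally entangled Bell state $|\mathcal{B}_i\rangle_{AB}$. Averaging over $i$ gives Alice and Bob a uniform ensemble $\{\tfrac14,|\mathcal{B}_i\rangle_{AB}\}$ with known labels, which is a distillable (indeed maximally entangled conditional) resource; in particular, Alice and Bob can locally rotate $|\mathcal{B}_i\rangle_{AB}$ to a fixed canonical Bell state, say $|\mathcal{B}_1\rangle_{AB}$, using a single classically conditioned local unitary on Bob's side. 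This converts the originally bound-entangled Smolin state into a shared maximally entangled pair between Alice and Bob, with $2$-ICS (available only to Dan and Emma) used as the sole additional resource.

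The main subtle point -- and the only place beyond a direct application of Proposition \ref{propo1} -- is justifying the classical-communication steps in a space-time whose causal structure is indefinite: Dan/Emma must classically broadcast $i$ to Alice and Bob, and within the $DE$ pair the classical exchanges required by the Bell-discrimination protocol must still be implementable. This is handled by the observation already used in the proofs of Theorem 1 and Proposition \ref{propo1}: even in a space-time carrying $2$-ICS one can always designate a pair of events that are definitely causally ordered (Appendix \ref{appa}), and the broadcast of Charlie's measurement outcome together with the parties' classical messages can be scheduled on such definitely-ordered events. I expect no further obstacle; the argument is essentially ``run Proposition \ref{propo1} on the $DE$ pair, then use the classical Smolin-unlocking correlation to hand a pure Bell state to $AB$.''
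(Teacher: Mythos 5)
Your proposal is correct and follows essentially the same route as the paper's proof: one pair runs the Bell-state discrimination protocol of Proposition~\ref{propo1} using $2$-ICS, broadcasts the identified index classically, and the other pair then holds a known Bell state that they can rotate to a canonical one. The only differences are cosmetic (you have Dan--Emma rather than Alice--Bob perform the discrimination) or added care (you explicitly schedule the classical broadcasts on definitely ordered events via Appendix~\ref{appa}, and you claim the distilled ebit only for the non-discriminating pair).
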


\begin{proof}
Since the four parties share the state $\rho_{ABDE}$, Alice and Bob share one of the four Bell states $\lbrace |\mathcal{B}_i\rangle\rbrace$, but do not know which one; Dan and Emma share the same Bell state, also not knowing which one it is. Since Alice and Bob can locally determine which Bell state they are sharing using $2$-ICS, they can communicate this classical information to Dan and Emma who will then know which Bell state they have and can locally convert it to any desired Bell state \cite{Smolin2001}, distilling entanglement from the Smolin state. After the distillation, each of the pairs -- Alice-Bob and Dan-Emma -- share 1 ebit of entanglement between them.
\end{proof}

\begin{table}[t]
{\centering
\begin{tabular}{||c|c|c|c||} 
 \hline
 Charlie's & Alice's  & Bob's  & Conclusion \\ [0.5ex] 
  outcome & outcome & outcome & \\ [0.5ex] 
 \hline\hline
 $+$ & $0$ & $1$ & $|\mathcal{B}_1\rangle$\\ 
 \hline
 $+$ & $1$ & $0$ & $|\mathcal{B}_1\rangle$ \\
 \hline
 $+$ & $0$ & $0$ & $|\mathcal{B}_3\rangle$\\ 
 \hline
 $+$ & $1$ & $1$ & $|\mathcal{B}_3\rangle$ \\
 \hline
 $-$ & $0$ & $1$ & $|\mathcal{B}_2\rangle$\\ 
 \hline
 $-$ & $1$ & $0$ & $|\mathcal{B}_2\rangle$ \\
 \hline
 $-$ & $0$ & $0$ & $|\mathcal{B}_4\rangle$\\ 
 \hline
 $-$ & $1$ & $1$ & $|\mathcal{B}_4\rangle$ \\
 \hline
\end{tabular}
\caption{Protocol for discriminating four Bell states locally using $2$-ICS}
    \label{tab2} }
\end{table}

\paragraph*{\textbf{Generalisation to higher dimensional bipartite systems:}}
While $2$-ICS serves as universal resource for local implementation of nonlocal quantum operations on an arbitrary $2 \times d$-dimensional bipartite quantum system,  
$3$-ICS can be shown to be a universal resource for local implementation of nonlocal quantum operations on any $3 \times d$-dimensional bipartite quantum system. Similarly, $4$-ICS can be shown to serve as universal resource for local implementation of nonlocal quantum operations on any $4 \times d$-dimensional bipartite quantum system. A more detailed discussion is given in Appendix \ref{gen}. From these results, we conjecture  that $d'$-ICS is a universal resource for local implementation of nonlocal quantum operations on an arbitrary bipartite state $\rho \in D(\mathbb{C}^{d'}\otimes\mathbb{C}^d)$, where $d'\leq d$.

\section{\label{five} Conclusion and Discussion}
Searching for new quantum  resources and finding out their information theoretic applications have attracted much attention in recent years. In the present article, we have shown that indefinite causal structure acts as a universal resource for implementing nonlocal quantum operations as well as nonlocal quantum tasks under LOCC on quantum systems distributed among two spatially separated parties. Specifically, we have proved it by proposing a novel back and forth perfect teleportation protocol using indefinite causal structure as the sole resource, which is completely different from the standard teleportation protocol that uses maximally entangled state as resource \cite{Bennett1993}. This proposed teleportation protocol may be helpful in gaining deeper understanding of the power of indefinite causal order. Consequently, we have also shown that the set of product states in $\mathbb{C}^2\otimes\mathbb{C}^2\otimes\mathbb{C}^d$ exhibiting quantum nonlocality without entanglement \cite{Bennet99} can also be perfectly distinguished under LOCC following our protocol utilizing indefinite causal structure solely. Furthermore, by presenting another example of nonlocal tasks, namely, perfect local discrimination of the set of four Bell states, we have demonstrated that the above-mentioned teleportation protocol proposed by us is not always necessary for local execution of nonlocal tasks with the aid of that resource. As an immediate corollary of it, we have also shown that one can distill entanglement from the Smolin state \cite{Smolin2001} under LOCC using indefinite causal structure. It might be noted here that in our proposed protocol, the resource 
gets regenerated at the end of the execution of nonlocal quantum operations or tasks and, therefore, can be reused for subsequent tasks.

It is quite exemplary to discuss about the schematics of practical realisation of the protocol mentioned in Sec. \ref{three}. The possibility of creating spatial superposition of mass can be ascertained by sending a massive particle with an embedded electronic spin (spin-$\frac{1}{2}$) through a Stern-Gerlach (SG) interferometer. If the embedded spin is prepared in, say, $\frac{1}{\sqrt{2}}(\ket{\uparrow_z}+\ket{\downarrow_z})$ state (where $\ket{\uparrow_z}$ and $\ket{\downarrow_z}$ are the eigenstates of $\sigma_z$ with eigenvalues $+1$ and $-1$ respectively) and an inhomogenous magnetic field in the SG apparatus is applied in the $z$-direction, then the mass will be split into two paths depending on the spin, thereby creating a spatial superposition of the mass. Mathematically, if we represent the spatial state of the mass localised at the centre of axis of the SG apparatus before entering as $\ket{C}$, then the splitting action of the SG is given by, 
$\ket{C}\frac{1}{\sqrt{2}}(\ket{\uparrow_z}+\ket{\downarrow_z}) \mapsto \frac{1}{\sqrt{2}}(\ket{\mathcal{M}_{X\rightarrow Y}}\ket{\uparrow_z}+\ket{\mathcal{M}_{Y\rightarrow X}}\ket{\downarrow_z})$, where $\ket{\mathcal{M}_{X\rightarrow Y}}$ and $\ket{\mathcal{M}_{Y\rightarrow X}}$ are the two spatial states corresponding to the two paths after the SG-split. This strategy of creating spatial superposition of massive objects has been discussed explicitly in Ref. \cite{Bose17}.  Note however that we can adapt the scheme proposed in \cite{Bose17} to a larger scale such that the mass can exist in spatial superposition of two distant locations i.e., not confined to a single laboratory.

Now, in order to execute our proposed protocol, this superposition is to be hold for a time, say, $t_{s}$ (by switching off the magnetic field of the SG apparatus for that duration). During this time window, all the steps in the protocol before Charlie's measurement is to be carried out. At this stage, the joint state of the mass configuration and Alice's and Bob's subsystems (Eq. (\ref{two_event_entanglement})) becomes $\frac{1}{\sqrt{2}}(\ket{\mathcal{M}_{X\rightarrow Y},\uparrow_z}U_1\ket{\psi}_AU_2\ket{\phi}_B + \ket{\mathcal{M}_{Y\rightarrow X},\downarrow_z}U_2\ket{\psi}_AU_1\ket{\phi}_B).$ The paths are then brought together in a common location and refocused by a refocussing SG apparatus with
magnetic-field inhomogeneity oriented in the opposite direction ($-z$-direction) \cite{Bose17}, which is represented mathematically as $\ket{\mathcal{M}_{X\rightarrow Y},\uparrow_z}\mapsto \ket{C}\ket{\uparrow_z}$ and $\ket{\mathcal{M}_{Y\rightarrow X},\downarrow_z}\mapsto \ket{C}\ket{\downarrow_z}$. Thereby, the joint state of the mass configuration and Alice's and Bob's subsystems becomes $\frac{1}{\sqrt{2}}\ket{C}(\ket{\uparrow_z}U_1\ket{\psi}_AU_2\ket{\phi}_B + \ket{\downarrow_z}U_2\ket{\psi}_AU_1\ket{\phi}_B)$. In this way the entanglement between the spatial degree of freedom of the mass and Alice and Bob's subsystems can be mapped onto the entanglement between the spin degree of freedom of the mass and Alice's and Bob's subsystems. Finally, by measuring the spin of the mass in $\sigma_x$ basis, measurement in the $\{\frac{1}{\sqrt{2}}(\ket{\mathcal{M}_{X\rightarrow Y}}\pm\ket{\mathcal{M}_{Y\rightarrow X}})\}$ basis can be implemented effectively.

Before concluding a few remarks are in order. In our described protocol, Alice and Bob are able to locally implement nonlocal operations essentially due to the fact that mass superposition is nonlocal, in the sense that it is a spatial superposition of the mass  at two distant locations, and Charlie can perform operations on this mass configuration. Alice and Bob implement local conditional operations based on the state of the mass configuration -- if the state of the mass is  $\ket{\mathcal{M}_{X\rightarrow Y}}$, then they implement one kind of operation, whereas if state of the mass is $\ket{\mathcal{M}_{Y\rightarrow X}}$, they implement another kind of operation. Therefore, our protocol seems merely like a controlled operation (like C-NOT or C-SWAP etc.) that leads to entanglement between Alice and Bob (after Charlie's measurement), without actually requiring indefinite causal order. To fully appreciate the importance of the role of indefinite causal structure, first of all note that such kind of controlled operations are non-local and, hence, cannot be realised by LOCC without the assistance of any other quantum resource. That is, if the control and the target(s) are located at different locations, then it is not possible to implement controlled operations by locally operating on the control and the target(s) separately and communicating classical information between them. To implement them locally quantum communication between control and target(s) is necessary and for that quantum channel, entangled state or, some other quantum entity must be shared between the control and the target(s). In our case, the mass is situated neither in Alice's laboratory, nor in Bob's laboratory -- both of the locations where the mass exists in superposition are outside Alice's and Bob's laboratories. Alice and Bob, being limited to local operations only, cannot access the state of the mass. To use the mass in spatially superposed configuration as a control for implementing the conditional local operations, quantum information about the state of the mass must somehow reach Alice and Bob. In our case, this information is carried by the gravitational field due to the mass. It is important to note here that as the mass is in spatial superposition, the gravitational field it produces is also quantum in nature (the space-time is in a superposition of two different metrics) \cite{PhysRevA.101.052110,Galley2022nogotheoremnatureof,arxiv.2202.03368}. In other words, the quantum gravitational field acts as a quantum channel that carries quantum information from the locations of the mass to Alice's and Bob's locations. The indefiniteness of causal structure is a particular manifestation of the quantum nature of the space-time (the gravitational field) that Alice and Bob utilise operationally. Thus, indefiniteness of the causal structure plays a vital role in using the mass configuration as control.

Recently, it has been shown that if three or more parties are allowed to exploit indefinite causal order, then local quantum operations assisted with classical processes can perfectly discriminate sets of quantum states in $\mathbb{C}^2\otimes\mathbb{C}^2\otimes\mathbb{C}^2$ exhibiting quantum nonlocality without entanglement \cite{Baumeler14,Baumeler16,Kunjwal22}. However, our result shown in Corollary \ref{corollary1} indicates that the set of states can be distinguished perfectly by LOCC when only two parties use indefinite causal structure ($2$-ICS) as a resource. Furthermore, Akibue {\it et al.} have shown that the set of local quantum operations connected by classical communication without any predefined causal order (which they denote by LOCC$^*$) is equivalent to the set of separable operations \cite{Murao17}. Notwithstanding, since we have already shown that indefinite causal structure is a universal resource for local implementation of any quantum operation, 
our protocol exploiting indefinite causal structure is supposed  to perform bigger set of operations than the separable operations. For example, even in $\mathbb{C}^2\otimes\mathbb{C}^2$, we have shown that our protocol can be used to discriminate the set of four Bell states perfectly which otherwise is not possible by separable operations and thus establishing the superiority of our protocol.

Our results open up the possibilities of several open questions. The teleportation protocol discussed in this paper utilizing indefinite causal structure as a resource can, in principle, be generalised for arbitrary higher dimensional quantum systems which we would like to keep open as a future direction for the interested readers. This has also the potential to show some fascinating applications in quantum network and quantum internet. Furthermore, in the multipartite scenario, there will be more than two possible causal orders between the parties and, thus, the indefiniteness in causal structure will be quite complex. It is thus interesting to generalize our study in multipartite context. Further, exploring the full quantum advantage of the indefinite causal structure in multipartite scenario may lead to new applications in information theory. Finally, quantifying the resources utilized in our study and connecting it with other  information theoretic resources are to be explored further. 

\begin{acknowledgments}
We thank Somshubhro Bandyopadhyay, Manik Banik and Mir Alimuddin for fruitful discussions and helpful insights.  We also thank the anonymous reviewers for their helpful  suggestions and illuminating discussion. P.G. acknowledges Department of Science $\&$ Technology, Government of India for financial support.  D.D. acknowledges the Royal Society (United Kingdom) for the support through the Newton International Fellowship (NIF$\backslash$R$1\backslash212007$). 
\end{acknowledgments}

\bibliography{ICO}
\begin{appendix}
\onecolumngrid
\section{\label{appa} Definite causal order in Indefinite causal structure}
In this section, we show that there always exist pairs of events whose causal order is definite even in space-time with indefinite causal structure. Let us go back to the illustration in Sec. \ref{sup}. Consider three events -- (i) $X$, defined by the local time $t_A=\tau^*$ at clock $A$, (ii) $Y$, defined by the local time $t_B=\tau^*$ at clock $B$, and (iii) $Y'$ defined by the local time $t_B=\tilde{\tau}$ at clock $B$, in their respective locations, such that $\tilde{\tau}>\tau^*$ and $\tau^*$ satisfies Eq. (\ref{tau*}) with $r_A=R$ and $r_B=R+h$. When the mass is located near the clock $B$ (i.e., quantum state of the mass is $\ket{\mathcal{M}_{X\rightarrow Y}}$), $t_B=\tau^*$ is in the causal future of $t_A=\tau^*$. Since we have considered $\tilde{\tau}>\tau^*$, $t_B=\tilde{\tau}$ will also be in causal future of $t_A=\tau^*$. That is event $Y'$ is in causal future of event $X$: $X\rightarrow Y'$.

On the other hand, when the mass is located near the clock $A$ (i.e., quantum state of the mass is $\ket{\mathcal{M}_{Y\rightarrow X}}$), if we want event $Y'$ to be in causal future of event $X$, light signal emitted from event $X$ must reach clock $B$'s location before $t_B=\tilde{\tau}$. The coordinate time taken for a light signal to travel from clock $A$ to clock $B$ is $T'_c=\frac{1}{c} \int_{R}^{R+h} dr' \sqrt{-\frac{g_{rr}(r')}{g_{tt}(r')}}$.
Time at clock $B$ when the light signal reaches there is $t_B=\sqrt{-g_{tt}(R+h)}\left( \frac{\tau_1}{\sqrt{-g_{tt}(R)}}+T'_c \right)$. This proper time must precede the proper time $t_B=\tilde{\tau}$, i.e., 
\begin{align}
\label{tau2}
    \sqrt{-g_{tt}(R+h)}\left( \frac{\tau^*}{\sqrt{-g_{tt}(R)}}+T'_c \right) \leq \tilde{\tau}.
\end{align}
Under this condition event $Y'$ will also be in the causal future of event $X$: $X\rightarrow Y'$. Similarly, we can also define events such that the event defined at the location of clock B is definitely in the causal past of the event defined at the location of clock A. So, if we choose the events $X$ and $Y'$ such that Eq. (\ref{tau2}) is satisfied along with Eq. (\ref{tau*}), we get definite causal order between them even in indefinite causal structure. This is shown in Figure \ref{ICO}.

\section{\label{gen} Indefinite causal structure as universal resource for local implementation of nonlocal quantum operations on higher dimensional quantum systems}

In this section we explicitly show how $3$-ICS and $4$-ICS can be used as universal resources for implementing nonlocal quantum operations locally on arbitrary quantum states in $\mathbb{C}^3\otimes\mathbb{C}^d (d\geq3)$ and $ \mathbb{C}^4\otimes\mathbb{C}^d (d\geq4)$ respectively. As mentioned in the main text, a sufficient way to prove this is by showing that $3(4)$-ICS can be used as the sole resource for back and forth teleportation of the state of the $3(4)$-dimensional subsystem of the initially shared system in such a way that the other party (that in possession of the $d$-dimensional subsystem) can have access to the entire state of the shared system in his/her local laboratory to implement any desirable quantum operation and finally can return the subsystem to its initial location. We show in the following that the said can be done for an arbitrary shared pure state and, hence, for any shared mixed state as well.

\subsubsection{$3$-ICS as universal resource for local implementation of nonlocal quantum operations on quantum states $\in \mathbb{C}^3\otimes\mathbb{C}^d$}

Consider the two spatially separated parties, Alice and Bob, share an arbitrary quantum state, $\ket{\psi}_{AB} = \sum_{i=0}^2 \sum_{j=0}^{d-1} \alpha_{ij} \ket{i}_A\ket{j}_B \in \mathbb{C}^3\otimes\mathbb{C}^d$ (where, $\sum_{i,j} |\alpha_{ij}|^2=1$) and Bob has an ancillary qutrit system in the state $|0\rangle_{B'}$ in his laboratory. Note that Bob's subsystem of the initially shared state will always be denoted by $B$ and the state of Bob's ancillary qutrit will be denoted by $B'$. Without loss of generality, let us define two events -- $X_1$ and $X_2$ in Alice's laboratory as Alice's local times $t_A=\tau_1$ and $t_A=\tau_2$ respectively with $\tau_1<\tau_2$, and another event $Y$ in Bob's laboratory as his local time $t_B=\tau_1$. Since, both the events $X_1$ and $X_2$ are in Alice's laboratory, their causal order is definite for any causal structure of the space-time, i.e., $X_1$ is always in causal past of $X_2$ ($X_1 \rightarrow X_2$). Depending on the geometry of the classical space-time, the events can have three classically distinct and mutually exclusive causal orders --\\
\\
(1) event $Y$ is in causal past of $X_1$ (which we denote $Y \rightarrow X_1 \rightarrow X_2$),\\
(2) event $Y$ is in causal future of $X_1$ but in causal past of $X_2$ (which we denote $X_1 \rightarrow Y \rightarrow X_2$), and\\
(3) event $Y$ is in causal future of $X_2$ (which we denote $X_1  \rightarrow X_2\rightarrow Y$).\\
\\
Let us denote the corresponding quantum states of mass configurations, each yielding one of the three space-time geometries with the above causal orders, as $\ket{\mathcal{M}_{Y \rightarrow X_1 \rightarrow X_2}}$, $\ket{\mathcal{M}_{X_1 \rightarrow Y \rightarrow X_2}}$ and $\ket{\mathcal{M}_{X_1  \rightarrow X_2\rightarrow Y}}$. Now, if we consider that the mass configuration exists in a coherent superposition of these three quantum states then space-time with the three distinct geometries (each resulting in one of the three distinct causal orders) also exist in superposition. We consider that a third agent, Charlie, can prepare the mass configuration in the quantum state $(\ket{\mathcal{M}_{Y \rightarrow X_1 \rightarrow X_2}}+\ket{\mathcal{M}_{X_1 \rightarrow Y \rightarrow X_2}}+\ket{\mathcal{M}_{X_1  \rightarrow X_2\rightarrow Y}})/\sqrt{3}$ and can implement quantum operations (including measurements) on the mass state.

For Alice to Bob teleportation, they follow the following protocol -- If Alice receives a signal from Bob before her local time $t_A=\tau_1$ (i.e., event $X_1$), then she will implement unitary operator $U_{X_1|Y} = U_2$ and then unitary operator $U_{X_2} = U_1$ at time $t_A=\tau_2$ on her qutrit subsystem. However, if she does not receive signal from Bob before $t_A=\tau_1$, then she will implement $U_{X_1} = U_1$ on her qutrit subsystem and send a signal to Bob carrying $1$ bit message "$0$". Then she will wait for $t_A=\tau_2$, before which if she receives signal from Bob, she will implement $U_{X_2|Y} = U_3$ on her subsystem; otherwise, she will implement $U_{X_2} = U_1$ on her subsystem and send a signal to Bob carrying $1$ bit message "$1$". On the other hand, if Bob receives signal from Alice before $t_B=\tau_1$ carrying message "$0$", he will implement $U_{Y|X_1} = U_2$ on his ancillary qutrit system; if he receives the message "$1$" before $t_B=\tau_1$, he will implement $U_{Y|X_2} = U_3$ on his ancillary system. Whereas, if he does not receive any signal before $t_B=\tau_1$, he will implement $U_Y=U_1$. Bob, after implementation of any unitary operation, always sends signal to Alice. Here, 
\begin{align}
    U_1 = \mathbb{I} = \left(\begin{array}{ccc} 1  & 0 & 0 \\ 0 & 1 & 0 \\ 0 & 0 & 1\end{array}\right),~~ U_2 = \left(\begin{array}{ccc} 0  & 0 & 1 \\ 1 & 0 & 0 \\ 0 & 1 & 0\end{array}\right),~~ U_3 = \left(\begin{array}{ccc} 0  & 1 & 0 \\ 0 & 0 & 1 \\ 1 & 0 & 0\end{array}\right).
\end{align}

Thus, the joint state of Alice and Bob's shared system, Bob's ancilla and the mass configuration controlled by Charlie can be written as
\begin{align}
\frac{1}{\sqrt{3}} &\left(\ket{\mathcal{M}_{X_1  \rightarrow X_2\rightarrow Y}} (U_{X_2}\otimes\mathbb{I})(U_{X_1}\otimes\mathbb{I})\ket{\psi}_{AB} U_{Y|X_2}\ket{0}_{B'} + \ket{\mathcal{M}_{X_1 \rightarrow Y \rightarrow X_2}} (U_{X_2|Y}\otimes\mathbb{I})(U_{X_1}\otimes\mathbb{I})\ket{\psi}_{AB} U_{Y|X_1}\ket{0}_{B'}\right. \nonumber\\
&\left. +\ket{\mathcal{M}_{Y \rightarrow X_1 \rightarrow X_2}} (U_{X_2}\otimes\mathbb{I})(U_{X_1|Y}\otimes\mathbb{I})\ket{\psi}_{AB} U_Y\ket{0}_{B'}\right) \nonumber \\
=\frac{1}{\sqrt{3}} &\left\{\ket{\mathcal{M}_{X_1  \rightarrow X_2\rightarrow Y}} \sum_{j=0}^{d-1}\left(\alpha_{0j}\ket{0}_A\ket{j}_B+\alpha_{1j}\ket{1}_A\ket{j}_B+\alpha_{2j}\ket{2}_A\ket{j}_B\right) \ket{2}_{B'}\right.\nonumber\\
&\left.+ \ket{\mathcal{M}_{X_1 \rightarrow Y \rightarrow X_2}} \sum_{j=0}^{d-1}\left(\alpha_{0j}\ket{2}_A\ket{j}_B+\alpha_{1j}\ket{0}_A\ket{j}_B+\alpha_{2j}\ket{1}_A\ket{j}_B\right) \ket{1}_{B'}\right. \nonumber\\
&\left.+ \ket{\mathcal{M}_{Y \rightarrow X_1 \rightarrow X_2}} \sum_{j=0}^{d-1}\left(\alpha_{0j}\ket{1}_A\ket{j}_B+\alpha_{1j}\ket{2}_A\ket{j}_B+\alpha_{2j}\ket{0}_A\ket{j}_B\right) \ket{0}_{B'}\right\} \nonumber \\
=\frac{1}{3}&\left[\ket{a}\sum_{j=0}^{d-1}\left\{\left(\alpha_{0j}\ket{2}_{B'}\ket{j}_B+\alpha_{1j}\ket{1}_{B'}\ket{j}_B+\alpha_{2j}\ket{0}_{B'}\ket{j}_B\right)\ket{0}_A\right.\right. \nonumber\\
&~~~~~~~~~~~~~~+\left.\left.\left(\alpha_{1j}\ket{2}_{B'}\ket{j}_B+\alpha_{2j}\ket{1}_{B'}\ket{j}_B+\alpha_{0j}\ket{0}_{B'}\ket{j}_B\right)\ket{1}_A\right.\right.\nonumber\\
&~~~~~~~~~~~~~~+\left.\left.\left(\alpha_{2j}\ket{2}_{B'}\ket{j}_B+\alpha_{0j}\ket{1}_{B'}\ket{j}_B+\alpha_{1j}\ket{0}_{B'}\ket{j}_B\right)\ket{2}_A\rbrace\right.\right. \nonumber\\
+&\left.\ket{b}\sum_{j=0}^{d-1}\left\{\left(\alpha_{0j}\ket{2}_{B'}\ket{j}_B+\omega^2\alpha_{1j}\ket{1}_{B'}\ket{j}_B+\omega\alpha_{2j}\ket{0}_{B'}\ket{j}_B\right)\ket{0}_A\right.\right. \nonumber\\
&~~~~~~~~~~~~+\left.\left.\left(\alpha_{1j}\ket{2}_{B'}\ket{j}_B+\omega^2\alpha_{2j}\ket{1}_{B'}\ket{j}_B+\omega\alpha_{0j}\ket{0}_{B'}\ket{j}_B\right)\ket{1}_A\right.\right.\nonumber\\
&~~~~~~~~~~~~+\left.\left.\left(\alpha_{2j}\ket{2}_{B'}\ket{j}_B+\omega^2\alpha_{0j}\ket{1}_{B'}\ket{j}_B+\omega\alpha_{1j}\ket{0}_{B'}\ket{j}_B\right)\ket{2}_A\right\}\right. \nonumber\\
+&\left.\ket{c}\sum_{j=0}^{d-1}\left\{\left(\alpha_{0j}\ket{2}_{B'}\ket{j}_B+\omega\alpha_{1j}\ket{1}_{B'}\ket{j}_B+\omega^2\alpha_{2j}\ket{0}_{B'}\ket{j}_B\right)\ket{0}_A\right.\right. \nonumber\\
&~~~~~~~~~~~~+\left.\left.\left(\alpha_{1j}\ket{2}_{B'}\ket{j}_B+\omega\alpha_{2j}\ket{1}_{B'}\ket{j}_B+\omega^2\alpha_{0j}\ket{0}_{B'}\ket{j}_B\right)\ket{1}_A\right.\right.\nonumber\\
&~~~~~~~~~~~~+\left.\left.\left(\alpha_{2j}\ket{2}_{B'}\ket{j}_B+\omega\alpha_{0j}\ket{1}_{B'}\ket{j}_B+\omega^2\alpha_{1j}\ket{0}_{B'}\ket{j}_B\right)\ket{2}_A\right\}\right],
\end{align}
where
\begin{align}
    \ket{a} &= \frac{1}{\sqrt{3}}(\ket{\mathcal{M}_{X_1  \rightarrow X_2\rightarrow Y}} + \ket{\mathcal{M}_{X_1 \rightarrow Y \rightarrow X_2}} + \ket{\mathcal{M}_{Y \rightarrow X_1 \rightarrow X_2}}),\nonumber \\
    \ket{b} &= \frac{1}{\sqrt{3}}(\ket{\mathcal{M}_{X_1  \rightarrow X_2\rightarrow Y}} + \omega \ket{\mathcal{M}_{X_1 \rightarrow Y \rightarrow X_2}} + \omega^2 \ket{\mathcal{M}_{Y \rightarrow X_1 \rightarrow X_2}}),\nonumber \\
    \ket{c} &= \frac{1}{\sqrt{3}}(\ket{\mathcal{M}_{X_1  \rightarrow X_2\rightarrow Y}} + \omega^2\ket{\mathcal{M}_{X_1 \rightarrow Y \rightarrow X_2}} + \omega\ket{\mathcal{M}_{Y \rightarrow X_1 \rightarrow X_2}})\nonumber \\
\end{align}
with $\omega = e^{i\frac{2\pi}{3}}$ being the cube root of unity. Clearly, these three states- $\ket{a}$, $\ket{b}$, $\ket{c}$  are orthonormal.

Now, Charlie measures the mass state in $\lbrace \ket{a},\ket{b},\ket{c} \rbrace$ basis and classically communicate his outcome $\lbrace a,b,c\rbrace$ to Bob.  
Also, Alice measures her qutrit in $\lbrace |0\rangle_A ,|1\rangle_A , \ket{2}_A \rbrace$ basis and communicate her outcome $i \in \lbrace 0,1,2\rbrace$ classically to Bob. Bob, upon receiving messages from Alice and Charlie, will accordingly make a correction operation on his ancillary system to get $\sum_{i=0}^2\sum_{j=0}^{d-1}\alpha_{ij}\ket{i}_{B'}\ket{j}_B$ as the joint state of ($B B'$). These are mentioned in the Table \ref{tab3}. Bob's correction operations mentioned in this table can be determined from the following expressions, 
\begin{align}
    &V_1 = \left(\begin{array}{ccc} 0  & 0 & 1 \\ 0 & 1 & 0 \\ 1 & 0 & 0\end{array}\right),~~ V_2 = \left(\begin{array}{ccc} 1  & 0 & 0 \\ 0 & 0 & 1 \\ 0 & 1 & 0\end{array}\right),~~ V_3 = \left(\begin{array}{ccc} 0  & 1 & 0 \\ 1 & 0 & 0 \\ 0 & 0 & 1\end{array}\right), ~~ \Omega_1 = \left(\begin{array}{ccc} \omega^2  & 0 & 0 \\ 0 & \omega & 0 \\ 0 & 0 & 1\end{array}\right),~~ \Omega_2 = \left(\begin{array}{ccc} \omega  & 0 & 0 \\ 0 & \omega^2 & 0 \\ 0 & 0 & 1\end{array}\right).
    \label{bobcorr3ics}
\end{align}
This completes Alice to Bob teleportation process. After the teleportation, Bob has $\ket{\psi}_{B'B}$ is his local laboratory, Alice has either $\ket{0}_A$ or $\ket{1}_A$ or $\ket{2}_A$, and Charlie has mass configuration in either $\ket{a}$ or $\ket{b}$ or $\ket{c}$.

\begin{table*}[t]
   { \centering
\begin{tabular}{||c|c|c|c|c||} 
 \hline
 \begin{tabular}[c]{@{}l@{}}Charlie's\\ outcome\end{tabular} & \begin{tabular}[c]{@{}l@{}}Alice's\\ outcome\end{tabular} & Bob's collapsed state & \begin{tabular}[c]{@{}l@{}}Bob's correction\\ operation\end{tabular} & Bob's final state \\ [0.5ex]
 \hline\hline
 $a$ & $0$ & $\sum_{j=0}^{d-1}\left(\alpha_{0j}\ket{2}_{B'}\ket{j}_B+\alpha_{1j}\ket{1}_{B'}\ket{j}_B+\alpha_{2j}\ket{0}_{B'}\ket{j}_B\right)$ & $V_1\otimes\mathbb{I}$ & \multirow{9}{*}{$\sum_{i=0}^2\sum_{j=0}^{d-1}\alpha_{ij}\ket{i}_{B'}\ket{j}_B$}\\ 
 \cline{1-4}
 $a$ & $1$ & $\sum_{j=0}^{d-1}\left(\alpha_{1j}\ket{2}_{B'}\ket{j}_B+\alpha_{2j}\ket{1}_{B'}\ket{j}_B+\alpha_{0j}\ket{0}_{B'}\ket{j}_B\right)$ & $V_2\otimes\mathbb{I}$ & \\
  \cline{1-4}
 $a$ & $2$ & $\sum_{j=0}^{d-1}\left(\alpha_{2j}\ket{2}_{B'}\ket{j}_B+\alpha_{0j}\ket{1}_{B'}\ket{j}_B+\alpha_{1j}\ket{0}_{B'}\ket{j}_B\right)$ & $V_3\otimes\mathbb{I}$ & \\
 \cline{1-4} 
 $b$ & $0$ & $\sum_{j=0}^{d-1}\left(\alpha_{0j}\ket{2}_{B'}\ket{j}_B+\omega^2\alpha_{1j}\ket{1}_{B'}\ket{j}_B+\omega\alpha_{2j}\ket{0}_{B'}\ket{j}_B\right)$ & $ V_1\Omega_1\otimes\mathbb{I}$ &\\
  \cline{1-4}
 $b$ & $1$ & $\sum_{j=0}^{d-1}\left(\alpha_{1j}\ket{2}_{B'}\ket{j}_B+\omega^2\alpha_{2j}\ket{1}_{B'}\ket{j}_B+\omega\alpha_{0j}\ket{0}_{B'}\ket{j}_B\right)$ & $ V_2\Omega_1\otimes\mathbb{I}$ &\\
  \cline{1-4}
 $b$ & $2$ & $\sum_{j=0}^{d-1}\left(\alpha_{2j}\ket{2}_{B'}\ket{j}_B+\omega^2\alpha_{0j}\ket{1}_{B'}\ket{j}_B+\omega\alpha_{1j}\ket{0}_{B'}\ket{j}_B\right)$ & $ V_3\Omega_1\otimes\mathbb{I}$ &\\
  \cline{1-4}
 $c$ & $0$ & $\sum_{j=0}^{d-1}\left(\alpha_{0j}\ket{2}_{B'}\ket{j}_B+\omega\alpha_{1j}\ket{1}_{B'}\ket{j}_B+\omega^2\alpha_{2j}\ket{0}_{B'}\ket{j}_B\right)$ & $V_1\Omega_2\otimes\mathbb{I}$ &\\
 \cline{1-4}
 $c$ & $1$ & $\sum_{j=0}^{d-1}\left(\alpha_{1j}\ket{2}_{B'}\ket{j}_B+\omega\alpha_{2j}\ket{1}_{B'}\ket{j}_B+\omega^2\alpha_{0j}\ket{0}_{B'}\ket{j}_B\right)$ & $V_2\Omega_2\otimes\mathbb{I}$ &\\
 \cline{1-4}
 $c$ & $2$ & $\sum_{j=0}^{d-1}\left(\alpha_{2j}\ket{2}_{B'}\ket{j}_B+\omega\alpha_{0j}\ket{1}_{B'}\ket{j}_B+\omega^2\alpha_{1j}\ket{0}_{B'}\ket{j}_B\right)$ & $V_3\Omega_2\otimes\mathbb{I}$ &\\
 \hline
\end{tabular}
\caption{Details of the teleportation protocol from Alice to Bob using $3$-ICS for locally implementing any quantum operation on an arbitrary pure quantum state $|\psi_{AB} \rangle \in  \mathbb{C}^3\otimes\mathbb{C}^d$ shared by Alice and Bob. Bob's correction operations mentioned here can be evaluated from Eq.(\ref{bobcorr3ics}).}
\label{tab3}}
\end{table*}

We will now describe the back-teleportation protocol from Bob to Alice. Let us consider an arbitrary state $\ket{\psi}_{B'B} = \sum_{i=0}^2 \sum_{j=0}^{d-1} \alpha_{ij} \ket{i}_{B'}\ket{j}_B \in \mathbb{C}^3\otimes\mathbb{C}^d$ (with $\sum_{i,j} |\alpha_{ij}|^2=1$) of Bob's ancillary qutrit and his $d$-dimensional subsystem. To start with, Alice resets her qutrit to $\ket{0}_A$, and Charlie resets his mass configuration to $\ket{a}$. Then, Alice and Bob follow the same protocol to implement respective unitary operations on Alice's system and Bob's ancillary system, just as earlier. Now, the joint state of Alice's qutrit, Bob's systems and the mass configuration controlled by Charlie can be written as

\begin{align}
    \frac{1}{\sqrt{3}} &\left(\ket{\mathcal{M}_{X_1  \rightarrow X_2\rightarrow Y}} U_{X_2}U_{X_1}\ket{0}_{A} (U_{Y|X_2}\otimes\mathbb{I})\ket{\psi}_{B'B} + \ket{\mathcal{M}_{X_1 \rightarrow Y \rightarrow X_2}} U_{X_2|Y}U_{X_1}\ket{0}_{A} (U_{Y|X_1}\otimes\mathbb{I})\ket{\psi}_{B'B}\right. \nonumber\\
    &\left. +\ket{\mathcal{M}_{Y \rightarrow X_1 \rightarrow X_2}} U_{X_2}U_{X_1|Y}\ket{0}_{A} (U_Y\otimes\mathbb{I})\ket{\psi}_{B'B}\right) \nonumber \\
    =\frac{1}{\sqrt{3}} &\left\{\ket{\mathcal{M}_{X_1  \rightarrow X_2\rightarrow Y}}\ket{0}_A \sum_{j=0}^{d-1}\left(\alpha_{0j}\ket{2}_{B'}\ket{j}_B+\alpha_{1j}\ket{0}_{B'}\ket{j}_B+\alpha_{2j}\ket{1}_{B'}\ket{j}_B\right)\right.\nonumber\\
    &\left.+ \ket{\mathcal{M}_{X_1 \rightarrow Y \rightarrow X_2}}\ket{2}_A \sum_{j=0}^{d-1}\left(\alpha_{0j}\ket{1}_{B'}\ket{j}_B+\alpha_{1j}\ket{2}_{B'}\ket{j}_B+\alpha_{2j}\ket{0}_{B'}\ket{j}_B\right)\right. \nonumber\\
    &\left.+ \ket{\mathcal{M}_{Y \rightarrow X_1 \rightarrow X_2}}\ket{1}_A \sum_{j=0}^{d-1}\left(\alpha_{0j}\ket{0}_{B'}\ket{j}_B+\alpha_{1j}\ket{1}_{B'}\ket{j}_B+\alpha_{2j}\ket{2}_{B'}\ket{j}_B\right)\right\} \nonumber \\
    =\frac{1}{3}&\left[\ket{a}\sum_{j=0}^{d-1}\left\{\left(\alpha_{1j}\ket{0}_{A}\ket{j}_B+\alpha_{2j}\ket{2}_{A}\ket{j}_B+\alpha_{0j}\ket{1}_{A}\ket{j}_B\right)\ket{0}_{B'}\right.\right.\nonumber\\
    &~~~~~~~~~~~~ +\left.\left.\left(\alpha_{2j}\ket{0}_{A}\ket{j}_B+\alpha_{0j}\ket{2}_A\ket{j}_B+\alpha_{1j}\ket{1}_{A}\ket{j}_B\right)\ket{1}_{B'}\right.\right.\nonumber\\
    &~~~~~~~~~~~~~~+\left.\left.\left(\alpha_{0j}\ket{0}_{A}\ket{j}_B+\alpha_{1j}\ket{2}_{A}\ket{j}_B+\alpha_{2j}\ket{1}_{A}\ket{j}_B\right)\ket{2}_{B'}\rbrace\right.\right. \nonumber\\
    +&\left.\ket{b}\sum_{j=0}^{d-1}\left\{\left(\alpha_{1j}\ket{0}_A\ket{j}_B+\omega^2\alpha_{2j}\ket{2}_A\ket{j}_B+\omega\alpha_{0j}\ket{1}_A\ket{j}_B\right)\ket{0}_{B'}\right.\right. \nonumber\\
    &~~~~~~~~~~~~+\left.\left.\left(\alpha_{2j}\ket{0}_A\ket{j}_B+\omega^2\alpha_{0j}\ket{2}_A\ket{j}_B+\omega\alpha_{1j}\ket{1}_A\ket{j}_B\right)\ket{1}_{B'}\right.\right.\nonumber\\
    &~~~~~~~~~~~~+\left.\left.\left(\alpha_{0j}\ket{0}_{A}\ket{j}_B+\omega^2\alpha_{1j}\ket{2}_{A}\ket{j}_B+\omega\alpha_{2j}\ket{1}_{A}\ket{j}_B\right)\ket{2}_{B'}\right\}\right. \nonumber\\
    +&\left.\ket{c}\sum_{j=0}^{d-1}\left\{\left(\alpha_{1j}\ket{0}_{A}\ket{j}_B+\omega\alpha_{2j}\ket{2}_{A}\ket{j}_B+\omega^2\alpha_{0j}\ket{1}_{A}\ket{j}_B\right)\ket{0}_{B'}\right.\right. \nonumber\\
    &~~~~~~~~~~~~+\left.\left.\left(\alpha_{2j}\ket{0}_{A}\ket{j}_B+\omega\alpha_{0j}\ket{2}_{A}\ket{j}_B+\omega^2\alpha_{1j}\ket{1}_{A}\ket{j}_B\right)\ket{1}_{B'}\right.\right.\nonumber\\
    &~~~~~~~~~~~~+\left.\left.\left(\alpha_{0j}\ket{0}_{A}\ket{j}_B+\omega\alpha_{1j}\ket{2}_{A}\ket{j}_B+\omega^2\alpha_{2j}\ket{1}_{A}\ket{j}_B\right)\ket{2}_{B'}\right\}\right].
\end{align}

Now, Charlie measures the mass state in $\lbrace \ket{a},\ket{b},\ket{c} \rbrace$ basis and classically communicate his outcome $\lbrace a,b,c\rbrace$ to Alice; and Bob measures his ancillary system in $\lbrace |0\rangle_{B'} ,|1\rangle_{B'} , \ket{2}_{B'} \rbrace$ basis and communicate his outcome $i \in \lbrace 0,1,2\rbrace$ classically to Alice. Alice, upon receiving messages from Bob and Charlie, will accordingly make a correction operation on her subsystem such that the $3 \times d$-dimensional shared state between Alice and Bob becomes $\sum_{i=0}^2\sum_{j=0}^{d-1}\alpha_{ij}\ket{i}_{A}\ket{j}_B$. These details are mentioned in the Table \ref{tab4}.

\begin{table*}[t]
   { \centering
\begin{tabular}{||c|c|c|c|c||} 
 \hline
 \begin{tabular}[c]{@{}l@{}}Charlie's\\ outcome\end{tabular} & \begin{tabular}[c]{@{}l@{}}Bob's\\ outcome\end{tabular} & Alice and Bob's collapsed state & \begin{tabular}[c]{@{}l@{}}Alice's correction\\ operation\end{tabular} & \begin{tabular}[c]{@{}l@{}}Alice and Bob's\\ final state\end{tabular} \\ [0.5ex]
 \hline\hline
 $a$ & $0$ & $\sum_{j=0}^{d-1}\left(\alpha_{1j}\ket{0}_{A}\ket{j}_B+\alpha_{2j}\ket{2}_{A}\ket{j}_B+\alpha_{0j}\ket{1}_{A}\ket{j}_B\right)$ & $V_3$ & \multirow{9}{*}{$\sum_{i=0}^2\sum_{j=0}^{d-1}\alpha_{ij}\ket{i}_{A}\ket{j}_B$}\\ 
 \cline{1-4}
 $a$ & $1$ & $\sum_{j=0}^{d-1}\left(\alpha_{2j}\ket{0}_{A}\ket{j}_B+\alpha_{0j}\ket{2}_A\ket{j}_B+\alpha_{1j}\ket{1}_{A}\ket{j}_B\right)$ & $V_1$ & \\
  \cline{1-4}
 $a$ & $2$ & $\sum_{j=0}^{d-1}\left(\alpha_{0j}\ket{0}_{A}\ket{j}_B+\alpha_{1j}\ket{2}_{A}\ket{j}_B+\alpha_{2j}\ket{1}_{A}\ket{j}_B\right)$ & $V_2$ & \\
 \cline{1-4} 
 $b$ & $0$ & $\sum_{j=0}^{d-1}\left(\alpha_{1j}\ket{0}_A\ket{j}_B+\omega^2\alpha_{2j}\ket{2}_A\ket{j}_B+\omega\alpha_{0j}\ket{1}_A\ket{j}_B\right)$ & $ V_3\Omega_1$ &\\
  \cline{1-4}
 $b$ & $1$ & $\sum_{j=0}^{d-1}\left(\alpha_{2j}\ket{0}_A\ket{j}_B+\omega^2\alpha_{0j}\ket{2}_A\ket{j}_B+\omega\alpha_{1j}\ket{1}_A\ket{j}_B\right)$ & $ V_1\Omega_1$ &\\
  \cline{1-4}
 $b$ & $2$ & $\sum_{j=0}^{d-1}\left(\alpha_{0j}\ket{0}_{A}\ket{j}_B+\omega^2\alpha_{1j}\ket{2}_{A}\ket{j}_B+\omega\alpha_{2j}\ket{1}_{A}\ket{j}_B\right)$ & $ V_2\Omega_1$ &\\
  \cline{1-4}
 $c$ & $0$ & $\sum_{j=0}^{d-1}\left(\alpha_{1j}\ket{0}_{A}\ket{j}_B+\omega\alpha_{2j}\ket{2}_{A}\ket{j}_B+\omega^2\alpha_{0j}\ket{1}_{A}\ket{j}_B\right)$ & $V_3\Omega_2 $ &\\
 \cline{1-4}
 $c$ & $1$ & $\sum_{j=0}^{d-1}\left(\alpha_{2j}\ket{0}_{A}\ket{j}_B+\omega\alpha_{0j}\ket{2}_{A}\ket{j}_B+\omega^2\alpha_{1j}\ket{1}_{A}\ket{j}_B\right)$ & $V_1\Omega_2 $ &\\
 \cline{1-4}
 $c$ & $2$ & $\sum_{j=0}^{d-1}\left(\alpha_{0j}\ket{0}_{A}\ket{j}_B+\omega\alpha_{1j}\ket{2}_{A}\ket{j}_B+\omega^2\alpha_{2j}\ket{1}_{A}\ket{j}_B\right)$ & $V_2\Omega_2 $ &\\
 \hline
\end{tabular}
\caption{Details of the back-teleportation protocol from Bob to Alice using $3$-ICS for locally implementing any quantum operation on an arbitrary pure quantum state $|\psi_{AB} \rangle \in  \mathbb{C}^3\otimes\mathbb{C}^d$ shared by Alice and Bob. Alice's correction operations mentioned here can be evaluated from Eq.(\ref{bobcorr3ics}).}
\label{tab4}}
\end{table*}

\subsubsection{$4$-ICS as universal resource for local implementation of nonlocal quantum operations on quantum states $\in \mathbb{C}^4\otimes\mathbb{C}^d$}

Consider Alice and Bob share an arbitrary quantum state $\ket{\psi}_{AB}=\sum_{i=0}^3\sum_{j=0}^{d-1}\alpha_{ij}\ket{i}_A\ket{j}_B \in \mathbb{C}^4\otimes\mathbb{C}^d$ (where $\sum_{i,j}|\alpha_{ij}|^2=1$).  Bob has an additional ancillary ququad system in $\ket{0}_{B'}$ state. Here also, throughout this section, Bob's subsystem of the initially shared state will be denoted by $B$ and the state of Bob's ancillary ququad will be denoted by $B'$. Let us define three events -- $X_1$, $X_2$, and $X_3$ in Alice's laboratory as Alice's local times $t_A=\tau_1$, $\tau_2$,  and $\tau_3$ with $\tau_1<\tau_2<\tau_3$, and one event $Y$ at Bob's laboratory as his local time $t_B=\tau_1$. Since, the events $X_1$, $X_2$, and $X_3$ are defined in the same location, they have fixed causal order, i.e., $X_1 \rightarrow X_2 \rightarrow X_3$. Now, depending on the geometry of the classical space-time, the events can have four classically distinct and mutually exclusive causal orders --\\
\\
(1) event $Y$ is in causal past of $X_1$ (which we denote $Y \rightarrow X_1 \rightarrow X_2 \rightarrow X_3$),\\
(2) event $Y$ is in causal future of $X_1$, but in causal past of $X_2$ and $X_3$ (which we denote $X_1 \rightarrow Y \rightarrow X_2 \rightarrow X_3$),\\
(3) event $Y$ is in causal future of $X_1$ and $X_2$, but in causal past of $X_3$ (which we denote $X_1  \rightarrow X_2 \rightarrow Y \rightarrow X_3$), and\\
(4) event $Y$ is in causal future of $X_1$, $X_2$ and $X_3$ (which we denote $X_1  \rightarrow X_2 \rightarrow X_3 \rightarrow Y$).\\
\\

Let us denote the corresponding quantum states of mass configurations, each yielding one of the four space-time geometries with the above-mentioned causal orders, as $\ket{\mathcal{M}_{Y \rightarrow X_1 \rightarrow X_2 \rightarrow X_3}}$, $\ket{\mathcal{M}_{X_1 \rightarrow Y \rightarrow X_2 \rightarrow X_3}}$, $\ket{\mathcal{M}_{X_1  \rightarrow X_2 \rightarrow Y \rightarrow X_3}}$ and $\ket{\mathcal{M}_{X_1  \rightarrow X_2 \rightarrow X_3 \rightarrow Y}}$. We consider that Charlie can prepare the mass configuration in a coherent superposition of these states, i.e., in the state $(\ket{\mathcal{M}_{X_1  \rightarrow X_2 \rightarrow X_3 \rightarrow Y}} + \ket{\mathcal{M}_{X_1  \rightarrow X_2 \rightarrow Y \rightarrow X_3}} + \ket{\mathcal{M}_{X_1 \rightarrow Y \rightarrow X_2 \rightarrow X_3}} + \ket{\mathcal{M}_{Y \rightarrow X_1 \rightarrow X_2 \rightarrow X_3}})/2$ and can implement quantum operations (including measurements) on the mass state.

For Alice to Bob teleportation, they follow the following protocol -- If Alice receives a signal from Bob before her local time $t_A=\tau_1$ (i.e., event $X_1$), then she will implement unitary operator $U_{X_1|Y} = U_2$ and then unitary operator $U_{X_2} = U_1, U_{X_3} = U_2$ at times $t_A=\tau_2 \text{ and }\tau_3$ respectively, on her ququad subsystem. However, if she does not receive signal from Bob before $t_A=\tau_1$, then she will implement $U_{X_1} = U_1$ on her ququad subsystem and send a signal to Bob carrying message "$0$". Then she will wait for $t_A=\tau_2$, before which if she receives signal from Bob, she will implement $U_{X_2|Y} = U_3$ at $t_A=\tau_2$, followed by $U_{X_3} = U_1$ at $t_A=\tau_3$ on her subsystem; otherwise, she will implement $U_{X_2} = U_1$ on her subsystem and send a signal to Bob carrying message "$1$". If she receives further signal from Bob before $t_A=\tau_3$, she will implement $U_{X_3|Y} = U_4$ at $t_A=\tau_3$, else she will implement $U_{X_3} = U_1$ and send a signal to Bob carrying message "$2$". On the other hand, if Bob receives signal from Alice before $t_B=\tau_1$, he will implement $U_{Y|X_1} = U_2,U_{Y|X_2} = U_3$ and $U_{Y|X_3} = U_4$ on his ancillary ququad system depending on whether he receives "$0$","$1$", or "$2$". Whereas, if he does not receive any signal before $t_B=\tau_1$, he will implement $U_Y=U_1$. Bob, after implementation of any unitary operation, always sends signal to Alice. Here,
\begin{align}
    U_1 = \left(\begin{array}{cccc} 1  & 0 & 0 & 0\\ 0 & 1 & 0 & 0\\ 0 & 0 & 1 &0 \\ 0 & 0 & 0 &1\end{array}\right)= \mathbb{I},~~ U_2 = \left(\begin{array}{cccc} 0  & 0 & 1 & 0\\ 0 & 0 & 0 & 1\\ 1 & 0 & 0 &0 \\ 0 & 1 & 0 & 0\end{array}\right),~~ U_3 = \left(\begin{array}{cccc} 0  & 0 & 0 & 1\\ 1 & 0 & 0 & 0\\ 0 & 1 & 0 &0 \\ 0 & 0 & 1 &0\end{array}\right),~~ U_4 = \left(\begin{array}{cccc} 0  & 1 & 0 & 0\\ 0 & 0 & 1 & 0\\ 0 & 0 & 0 &1 \\ 1 & 0 & 0 &0\end{array}\right).
\end{align}

\begin{table*}[t]
   { \centering
\begin{tabular}{||c|c|c|c|c||} 
 \hline
 \begin{tabular}[c]{@{}l@{}}Charlie's\\ outcome\end{tabular} & \begin{tabular}[c]{@{}l@{}}Alice's\\ outcome\end{tabular} & Bob's collapsed state & \begin{tabular}[c]{@{}l@{}}Bob's correction\\ operation\end{tabular} & \begin{tabular}[c]{@{}l@{}}Bob's\\ final state\end{tabular} \\ [0.5ex]
 \hline\hline
 $a$ & $0$ & $\sum_{j=0}^{d-1}(\alpha_{0j}\ket{3}_{B'}\ket{j}_B+\alpha_{1j}\ket{1}_{B'}\ket{j}_B+\alpha_{3j}\ket{2}_{B'}\ket{j}_B+\alpha_{2j}\ket{0}_{B'}\ket{j}_B)$ & $V_1\otimes\mathbb{I}$ & \multirow{16}{*}{$\sum_{i=0}^3\sum_{j=0}^{d-1}\alpha_{ij}\ket{i}_{B'}\ket{j}_{B}$}\\ 
 \cline{1-4}
 $a$ & $1$ & $\sum_{j=0}^{d-1}(\alpha_{1j}\ket{3}_{B'}\ket{j}_B+\alpha_{2j}\ket{1}_{B'}\ket{j}_B+\alpha_{0j}\ket{2}_{B'}\ket{j}_B+\alpha_{3j}\ket{0}_{B'}\ket{j}_B)$ & $V_2\otimes\mathbb{I}$ & \\
 \cline{1-4}
 $a$ & $2$ & $\sum_{j=0}^{d-1}(\alpha_{2j}\ket{3}_{B'}\ket{j}_B+\alpha_{3j}\ket{1}_{B'}\ket{j}_B+\alpha_{1j}\ket{2}_{B'}\ket{j}_B+\alpha_{0j}\ket{0}_{B'}\ket{j}_B)$ & $V_3\otimes\mathbb{I}$ & \\
 \cline{1-4}
 $a$ & $3$ & $\sum_{j=0}^{d-1}(\alpha_{3j}\ket{3}_{B'}\ket{j}_B+\alpha_{0j}\ket{1}_{B'}\ket{j}_B+\alpha_{2j}\ket{2}_{B'}\ket{j}_B+\alpha_{1j}\ket{0}_{B'}\ket{j}_B)$ & $V_4\otimes\mathbb{I}$ & \\
 \cline{1-4}
 $b$ & $0$ & $\sum_{j=0}^{d-1}(\alpha_{0j}\ket{3}_{B'}\ket{j}_B-\alpha_{1j}\ket{1}_{B'}\ket{j}_B-\alpha_{3j}\ket{2}_{B'}\ket{j}_B+\alpha_{2j}\ket{0}_{B'}\ket{j}_B)$ & $V_1\Omega_1\otimes\mathbb{I} $ &\\
 \cline{1-4}
 $b$ & $1$ & $\sum_{j=0}^{d-1}(\alpha_{1j}\ket{3}_{B'}\ket{j}_B-\alpha_{2j}\ket{1}_{B'}\ket{j}_B-\alpha_{0j}\ket{2}_{B'}\ket{j}_B+\alpha_{3j}\ket{0}_{B'}\ket{j}_B)$ & $V_2\Omega_1\otimes\mathbb{I} $ &\\
 \cline{1-4}
 $b$ & $2$ & $\sum_{j=0}^{d-1}(\alpha_{2j}\ket{3}_{B'}\ket{j}_B-\alpha_{3j}\ket{1}_{B'}\ket{j}_B-\alpha_{1j}\ket{2}_{B'}\ket{j}_B+\alpha_{0j}\ket{0}_{B'}\ket{j}_B)$ & $V_3\Omega_1\otimes\mathbb{I} $ &\\
 \cline{1-4}
 $b$ & $3$ & $\sum_{j=0}^{d-1}(\alpha_{3j}\ket{3}_{B'}\ket{j}_B-\alpha_{0j}\ket{1}_{B'}\ket{j}_B-\alpha_{2j}\ket{2}_{B'}\ket{j}_B+\alpha_{1j}\ket{0}_{B'}\ket{j}_B)$ & $V_4\Omega_1\otimes\mathbb{I} $ &\\
 \cline{1-4}
 $c$ & $0$ & $\sum_{j=0}^{d-1}(\alpha_{0j}\ket{3}_{B'}\ket{j}_B-\alpha_{1j}\ket{1}_{B'}\ket{j}_B+\alpha_{3j}\ket{2}_{B'}\ket{j}_B-\alpha_{2j}\ket{0}_{B'}\ket{j}_B)$ & $V_1\Omega_2\otimes\mathbb{I} $ &\\
 \cline{1-4}
 $c$ & $1$ & $\sum_{j=0}^{d-1}(\alpha_{1j}\ket{3}_{B'}\ket{j}_B-\alpha_{2j}\ket{1}_{B'}\ket{j}_B+\alpha_{0j}\ket{2}_{B'}\ket{j}_B-\alpha_{3j}\ket{0}_{B'}\ket{j}_B)$ & $V_2\Omega_2\otimes\mathbb{I} $ &\\
 \cline{1-4}
 $c$ & $2$ & $\sum_{j=0}^{d-1}(\alpha_{2j}\ket{3}_{B'}\ket{j}_B-\alpha_{3j}\ket{1}_{B'}\ket{j}_B+\alpha_{1j}\ket{2}_{B'}\ket{j}_B-\alpha_{0j}\ket{0}_{B'}\ket{j}_B)$ & $V_3\Omega_2\otimes\mathbb{I} $ &\\
 \cline{1-4}
 $c$ & $3$ & $\sum_{j=0}^{d-1}(\alpha_{3j}\ket{3}_{B'}\ket{j}_B-\alpha_{0j}\ket{1}_{B'}\ket{j}_B+\alpha_{2j}\ket{2}_{B'}\ket{j}_B-\alpha_{1j}\ket{0}_{B'}\ket{j}_B)$ & $V_4\Omega_2\otimes\mathbb{I} $ &\\
 \cline{1-4}
 $d$ & $0$ & $\sum_{j=0}^{d-1}(\alpha_{0j}\ket{3}_{B'}\ket{j}_B+\alpha_{1j}\ket{1}_{B'}\ket{j}_B-\alpha_{3j}\ket{2}_{B'}\ket{j}_B-\alpha_{2j}\ket{0}_{B'}\ket{j}_B)$ & $V_1\Omega_3\otimes\mathbb{I} $ &\\
 \cline{1-4}
 $d$ & $1$ & $\sum_{j=0}^{d-1}(\alpha_{1j}\ket{3}_{B'}\ket{j}_B+\alpha_{2j}\ket{1}_{B'}\ket{j}_B-\alpha_{0j}\ket{2}_{B'}\ket{j}_B-\alpha_{3j}\ket{0}_{B'}\ket{j}_B)$ & $V_2\Omega_3\otimes\mathbb{I} $ &\\
 \cline{1-4}
 $d$ & $2$ & $\sum_{j=0}^{d-1}(\alpha_{2j}\ket{3}_{B'}\ket{j}_B+\alpha_{3j}\ket{1}_{B'}\ket{j}_B-\alpha_{1j}\ket{2}_{B'}\ket{j}_B-\alpha_{0j}\ket{0}_{B'}\ket{j}_B)$ & $V_3\Omega_3\otimes\mathbb{I} $ &\\
 \cline{1-4}
 $d$ & $3$ & $\sum_{j=0}^{d-1}(\alpha_{3j}\ket{3}_{B'}\ket{j}_B+\alpha_{0j}\ket{1}_{B'}\ket{j}_B-\alpha_{2j}\ket{2}_{B'}\ket{j}_B-\alpha_{1j}\ket{0}_{B'}\ket{j}_B)$ & $V_4\Omega_3\otimes\mathbb{I} $ &\\
 \hline
\end{tabular}
\caption{Details of the teleportation protocol from Alice to Bob using $4$-ICS for locally implementing any quantum operation on an arbitrary pure quantum state $|\psi_{AB} \rangle \in  \mathbb{C}^4\otimes\mathbb{C}^d$ shared by Alice and Bob. Bob's correction operations mentioned here can be evaluated from Eq.(\ref{bobcorr4ics}).}
\label{tab5}}
\end{table*}

Thus, the joint state of Alice and Bob's shared system, Bob's ancilla and the mass configuration controlled by Charlie can be written as
\begin{align}
    \frac{1}{2}&\left(\ket{\mathcal{M}_{X_1  \rightarrow X_2 \rightarrow X_3 \rightarrow Y}} (U_{X_3}\otimes\mathbb{I})(U_{X_2}\otimes\mathbb{I})(U_{X_1}\otimes\mathbb{I})\ket{\psi}_{AB} U_{Y|X_3}\ket{0}_{B'}\right.\nonumber\\
    &+ \left.\ket{\mathcal{M}_{X_1 \rightarrow X_2 \rightarrow Y \rightarrow X_3}} (U_{X_3|Y}\otimes\mathbb{I})(U_{X_2}\otimes\mathbb{I})(U_{X_1}\otimes\mathbb{I})\ket{\psi}_{AB} U_{Y|X_2}\ket{0}_{B'}\right. \nonumber \\
    &+ \left.\ket{\mathcal{M}_{X_1 \rightarrow Y \rightarrow X_2 \rightarrow X_3}} (U_{X_3}\otimes\mathbb{I})(U_{X_2|Y}\otimes\mathbb{I})(U_{X_1}\otimes\mathbb{I})\ket{\psi}_{AB} U_{Y|X_1}\ket{0}_{B'}\right. \nonumber\\
    &+ \left.\ket{\mathcal{M}_{Y \rightarrow X_1 \rightarrow X_2 \rightarrow X_3}}(U_{X_3}\otimes\mathbb{I})(U_{X_2}\otimes\mathbb{I})(U_{X_1|Y}\otimes\mathbb{I})\ket{\psi}_{AB} U_{Y}\ket{0}_{B'}\right) \nonumber \\
    =\frac{1}{2}&\left(\ket{\mathcal{M}_{X_1  \rightarrow X_2 \rightarrow X_3 \rightarrow Y}} \ket{\psi}_{AB}\ket{3}_{B'} + \ket{\mathcal{M}_{X_1 \rightarrow X_2 \rightarrow Y \rightarrow X_3}}\ket{\phi}_{AB}\ket{1}_{B'}+\ket{\mathcal{M}_{X_1 \rightarrow Y \rightarrow X_2 \rightarrow X_3}}\ket{\lambda}_{AB}\ket{2}_{B'}\right. \nonumber\\
    &+ \left.\ket{\mathcal{M}_{Y \rightarrow X_1 \rightarrow X_2 \rightarrow X_3}}\ket{\mu}_{AB}\ket{0}_{B'}\right) \nonumber \\
    = \frac{1}{4}&\left\{\ket{a}(\ket{\psi}_{AB} \ket{3}_{B'} + \ket{\phi}_{AB}\ket{1}_{B'} + \ket{\lambda}_{AB} \ket{2}_{B'} + \ket{\mu}_{AB} \ket{0}_{B'})\right. \nonumber\\
    &+\left.\ket{b}(\ket{\psi}_{AB} \ket{3}_{B'} - \ket{\phi}_{AB}\ket{1}_{B'} - \ket{\lambda}_{AB} \ket{2}_{B'} + \ket{\mu}_{AB} \ket{0}_{B'})\right. \nonumber \\
    &+ \left.\ket{c}(\ket{\psi}_{AB} \ket{3}_{B'} - \ket{\phi}_{AB}\ket{1}_{B'} + \ket{\lambda}_{AB} \ket{2}_{B'} - \ket{\mu}_{AB} \ket{0}_{B'} )\right.\nonumber\\
    &+\left. \ket{d}(\ket{\psi}_{AB} \ket{3}_{B'} + \ket{\phi}_{AB}\ket{1}_{B'} - \ket{\lambda}_{AB} \ket{2}_{B'} - \ket{\mu}_{AB} \ket{0}_{B'} )\right\},
\end{align}
where 
\begin{align}
    \ket{\phi}_{AB}=(U_4\otimes\mathbb{I})\ket{\psi}_{AB} = \sum_{j=0}^{d-1}\left(\alpha_{0j}\ket{3}_A\ket{j}_B+\alpha_{1j}\ket{0}_A\ket{j}_B+\alpha_{2j}\ket{1}_A\ket{j}_B+\alpha_{3j}\ket{2}_A\ket{j}_B\right),\nonumber\\
    \ket{\lambda}_{AB}=(U_3\otimes\mathbb{I})\ket{\psi}_{AB} = \sum_{j=0}^{d-1}\left(\alpha_{0j}\ket{1}_A\ket{j}_B+\alpha_{1j}\ket{2}_A\ket{j}_B+\alpha_{2j}\ket{3}_A\ket{j}_B+\alpha_{3j}\ket{0}_A\ket{j}_B\right),\nonumber\\
    \ket{\mu}_{AB}=(U_2\otimes\mathbb{I})\ket{\psi}_{AB} = \sum_{j=0}^{d-1}\left(\alpha_{0j}\ket{2}_A\ket{j}_B+\alpha_{1j}\ket{3}_A\ket{j}_B+\alpha_{2j}\ket{0}_A\ket{j}_B+\alpha_{3j}\ket{1}_A\ket{j}_B\right),
\end{align}
and
\begin{align}
    \ket{a} &= \frac{1}{2} (\ket{\mathcal{M}_{X_1  \rightarrow X_2 \rightarrow X_3 \rightarrow Y}} + \ket{\mathcal{M}_{X_1  \rightarrow X_2 \rightarrow Y \rightarrow X_3}} + \ket{\mathcal{M}_{X_1 \rightarrow Y \rightarrow X_2 \rightarrow X_3}} + \ket{\mathcal{M}_{Y \rightarrow X_1 \rightarrow X_2 \rightarrow X_3}}),\nonumber \\
    \ket{b} &= \frac{1}{2} (\ket{\mathcal{M}_{X_1  \rightarrow X_2 \rightarrow X_3 \rightarrow Y}} - \ket{\mathcal{M}_{X_1  \rightarrow X_2 \rightarrow Y \rightarrow X_3}} - \ket{\mathcal{M}_{X_1 \rightarrow Y \rightarrow X_2 \rightarrow X_3}} + \ket{\mathcal{M}_{Y \rightarrow X_1 \rightarrow X_2 \rightarrow X_3}}),\nonumber \\
    \ket{c} &= \frac{1}{2} (\ket{\mathcal{M}_{X_1  \rightarrow X_2 \rightarrow X_3 \rightarrow Y}} - \ket{\mathcal{M}_{X_1  \rightarrow X_2 \rightarrow Y \rightarrow X_3}} + \ket{\mathcal{M}_{X_1 \rightarrow Y \rightarrow X_2 \rightarrow X_3}} - \ket{\mathcal{M}_{Y \rightarrow X_1 \rightarrow X_2 \rightarrow X_3}}),\nonumber \\
    \ket{d} &=\frac{1}{2} (\ket{\mathcal{M}_{X_1  \rightarrow X_2 \rightarrow X_3 \rightarrow Y}} + \ket{\mathcal{M}_{X_1  \rightarrow X_2 \rightarrow Y \rightarrow X_3}} - \ket{\mathcal{M}_{X_1 \rightarrow Y \rightarrow X_2 \rightarrow X_3}} - \ket{\mathcal{M}_{Y \rightarrow X_1 \rightarrow X_2 \rightarrow X_3}}).
\end{align}

\begin{table*}[t]
   { \centering
\begin{tabular}{||c|c|c|c|c||} 
 \hline
 \begin{tabular}[c]{@{}l@{}}Charlie's\\ outcome\end{tabular} & \begin{tabular}[c]{@{}l@{}}Bob's\\ outcome\end{tabular} & Alice and Bob's collapsed state & \begin{tabular}[c]{@{}l@{}}Alice's correction\\ operation\end{tabular} & \begin{tabular}[c]{@{}l@{}}Alice and Bob's\\ final state\end{tabular} \\ [0.5ex]
 \hline\hline
 $a$ & $0$ & $\sum_{j=0}^{d-1}(\alpha_{1j}\ket{0}_{A}\ket{j}_B+\alpha_{3j}\ket{3}_{A}\ket{j}_B+\alpha_{2j}\ket{1}_{A}\ket{j}_B+\alpha_{0j}\ket{2}_{A}\ket{j}_B)$ & $W_1$ & \multirow{16}{*}{$\sum_{i=0}^3\sum_{j=0}^{d-1}\alpha_{ij}\ket{i}_{A}\ket{j}_{B}$}\\ 
 \cline{1-4}
 $a$ & $1$ & $\sum_{j=0}^{d-1}(\alpha_{2j}\ket{0}_{A}\ket{j}_B+\alpha_{0j}\ket{3}_{A}\ket{j}_B+\alpha_{3j}\ket{1}_{A}\ket{j}_B+\alpha_{1j}\ket{2}_{A}\ket{j}_B)$ & $W_2$ & \\
 \cline{1-4}
 $a$ & $2$ & $\sum_{j=0}^{d-1}(\alpha_{3j}\ket{0}_{A}\ket{j}_B+\alpha_{1j}\ket{3}_{A}\ket{j}_B+\alpha_{0j}\ket{1}_{A}\ket{j}_B+\alpha_{2j}\ket{2}_{A}\ket{j}_B)$ & $W_3$ & \\
 \cline{1-4}
 $a$ & $3$ & $\sum_{j=0}^{d-1}(\alpha_{0j}\ket{0}_{A}\ket{j}_B+\alpha_{2j}\ket{3}_{A}\ket{j}_B+\alpha_{1j}\ket{1}_{A}\ket{j}_B+\alpha_{3j}\ket{2}_{A}\ket{j}_B)$ & $W_4$ & \\
 \cline{1-4}
 $b$ & $0$ & $\sum_{j=0}^{d-1}(\alpha_{1j}\ket{0}_{A}\ket{j}_B-\alpha_{3j}\ket{3}_{A}\ket{j}_B-\alpha_{2j}\ket{1}_{A}\ket{j}_B+\alpha_{0j}\ket{2}_{A}\ket{j}_B)$ & $W_1\Omega_1$ &\\
 \cline{1-4}
 $b$ & $1$ & $\sum_{j=0}^{d-1}(\alpha_{2j}\ket{0}_{A}\ket{j}_B-\alpha_{0j}\ket{3}_{A}\ket{j}_B-\alpha_{3j}\ket{1}_{A}\ket{j}_B+\alpha_{1j}\ket{2}_{A}\ket{j}_B)$ & $W_2\Omega_1$ &\\
 \cline{1-4}
 $b$ & $2$ & $\sum_{j=0}^{d-1}(\alpha_{3j}\ket{0}_{A}\ket{j}_B-\alpha_{1j}\ket{3}_{A}\ket{j}_B-\alpha_{0j}\ket{1}_{A}\ket{j}_B+\alpha_{2j}\ket{2}_{A}\ket{j}_B)$ & $W_3\Omega_1$ &\\
 \cline{1-4}
 $b$ & $3$ & $\sum_{j=0}^{d-1}(\alpha_{0j}\ket{0}_{A}\ket{j}_B-\alpha_{2j}\ket{3}_{A}\ket{j}_B-\alpha_{1j}\ket{1}_{A}\ket{j}_B+\alpha_{3j}\ket{2}_{A}\ket{j}_B)$ & $W_4\Omega_1$ &\\
 \cline{1-4}
 $c$ & $0$ & $\sum_{j=0}^{d-1}(\alpha_{1j}\ket{0}_{A}\ket{j}_B-\alpha_{3j}\ket{3}_{A}\ket{j}_B+\alpha_{2j}\ket{1}_{A}\ket{j}_B-\alpha_{0j}\ket{2}_{A}\ket{j}_B)$ & $W_1\Omega_2$ &\\
 \cline{1-4}
 $c$ & $1$ & $\sum_{j=0}^{d-1}(\alpha_{2j}\ket{0}_{A}\ket{j}_B-\alpha_{0j}\ket{3}_{A}\ket{j}_B+\alpha_{3j}\ket{1}_{A}\ket{j}_B-\alpha_{1j}\ket{2}_{A}\ket{j}_B)$ & $W_2\Omega_2$ &\\
 \cline{1-4}
 $c$ & $2$ & $\sum_{j=0}^{d-1}(\alpha_{3j}\ket{0}_{A}\ket{j}_B-\alpha_{1j}\ket{3}_{A}\ket{j}_B+\alpha_{0j}\ket{1}_{A}\ket{j}_B-\alpha_{2j}\ket{2}_{A}\ket{j}_B)$ & $W_3\Omega_2$ &\\
 \cline{1-4}
 $c$ & $3$ & $\sum_{j=0}^{d-1}(\alpha_{0j}\ket{0}_{A}\ket{j}_B-\alpha_{2j}\ket{3}_{A}\ket{j}_B+\alpha_{1j}\ket{1}_{A}\ket{j}_B-\alpha_{3j}\ket{2}_{A}\ket{j}_B)$ & $W_4\Omega_2$ &\\
 \cline{1-4}
 $d$ & $0$ & $\sum_{j=0}^{d-1}(\alpha_{1j}\ket{0}_{A}\ket{j}_B+\alpha_{3j}\ket{3}_{A}\ket{j}_B-\alpha_{2j}\ket{1}_{A}\ket{j}_B-\alpha_{0j}\ket{2}_{A}\ket{j}_B)$ & $W_1\Omega_3$ &\\
 \cline{1-4}
 $d$ & $1$ & $\sum_{j=0}^{d-1}(\alpha_{2j}\ket{0}_{A}\ket{j}_B+\alpha_{0j}\ket{3}_{A}\ket{j}_B-\alpha_{3j}\ket{1}_{A}\ket{j}_B-\alpha_{1j}\ket{2}_{A}\ket{j}_B)$ & $W_2\Omega_3$ &\\
 \cline{1-4}
 $d$ & $2$ & $\sum_{j=0}^{d-1}(\alpha_{3j}\ket{0}_{A}\ket{j}_B+\alpha_{1j}\ket{3}_{A}\ket{j}_B-\alpha_{0j}\ket{1}_{A}\ket{j}_B-\alpha_{2j}\ket{2}_{A}\ket{j}_B)$ & $W_3\Omega_3$ &\\
 \cline{1-4}
 $d$ & $3$ & $\sum_{j=0}^{d-1}(\alpha_{0j}\ket{0}_{A}\ket{j}_B+\alpha_{2j}\ket{3}_{A}\ket{j}_B-\alpha_{1j}\ket{1}_{A}\ket{j}_B-\alpha_{3j}\ket{2}_{A}\ket{j}_B)$ & $W_4\Omega_3$ &\\
 \hline
\end{tabular}
\caption{Details of the back-teleportation protocol from Bob to Alice using $4$-ICS for locally implementing any quantum operation on an arbitrary pure quantum state $|\psi_{AB} \rangle \in  \mathbb{C}^4\otimes\mathbb{C}^d$ shared by Alice and Bob. Alice's correction operations mentioned here can be evaluated from Eq.(\ref{Alicecorr4ics}).}
\label{tab6}}
\end{table*}

Now Charlie measures the mass state in $\lbrace \ket{a},\ket{b},\ket{c},\ket{d} \rbrace$ basis and classically communicates his outcomes $\lbrace a,b,c,d \rbrace$ to Bob. Alice also measure her subsystem in $\lbrace |0\rangle_A ,|1\rangle_A , \ket{2}_A , \ket{3}_A \rbrace$ basis and communicates her outcomes $\lbrace 0,1,2,3 \rbrace$ classically to Bob. Bob then accordingly makes a correction operation on his ancillary system to get  $\ket{\psi}_{B'B} = \sum_{i=0}^2\sum_{j=0}^{d-1}\alpha_{ij}\ket{i}_{B'}\ket{j}_B$  as the joint state of ($B B'$). These details are presented in the Table \ref{tab5}. Bob's correction operations mentioned in this table can be determined from the following expressions, 
\begin{align}
    &V_1 = \left(\begin{array}{cccc} 0  & 0 & 0 & 1\\ 0 & 1 & 0 & 0\\ 1 & 0 & 0 & 0\\ 0 & 0 & 1 & 0\end{array}\right),~~ V_2 = \left(\begin{array}{cccc} 0  & 0 & 1 & 0\\ 0 & 0 & 0 & 1\\ 0 & 1 & 0 & 0\\ 1 & 0 & 0 & 0\end{array}\right),~~ V_3 = \left(\begin{array}{cccc} 1  & 0 & 0 & 0\\ 0 & 0 & 1 & 0\\ 0 & 0 & 0 & 1\\ 0 & 1 & 0 & 0\end{array}\right),~~ V_4 = \left(\begin{array}{cccc} 0  & 1 & 0 & 0\\ 1 & 0 & 0 & 0\\ 0 & 0 & 1 & 0\\ 0 & 0 & 0 & 1\end{array}\right) \nonumber \\
   & \Omega_1 = \left(\begin{array}{cccc} 1  & 0 & 0 & 0\\ 0 & -1 & 0 & 0\\ 0 & 0 & -1 & 0\\ 0 & 0 & 0 & 1\end{array}\right),~~ \Omega_2 = \left(\begin{array}{cccc} -1  & 0 & 0 & 0\\ 0 & -1 & 0 & 0\\ 0 & 0 & 1 & 0\\ 0 & 0 & 0 & 1\end{array}\right),~~ \Omega_3 = \left(\begin{array}{cccc} -1  & 0 & 0 & 0\\ 0 & 1 & 0 & 0\\ 0 & 0 & -1 & 0\\ 0 & 0 & 0 & 1\end{array}\right).
   \label{bobcorr4ics}
\end{align}

After the teleportation process, Alice ends up with either of the $\ket{0}_A,\ket{1}_A,\ket{2}_A,\ket{3}_A$ states and Charlie ends up with mass configuration in either of the $\ket{a},\ket{b},\ket{c},\ket{d}$ states. 

We will now describe the back-teleportation protocol from Bob to Alice. Let us consider an arbitrary state $\ket{\psi}_{B'B}=\sum_{i=0}^3\sum_{j=0}^{d-1}\alpha_{ij}\ket{i}_{B'} \ket{j}_B \in \mathbb{C}^4\otimes\mathbb{C}^d$ (where $\sum_{i,j}|\alpha_{ij}|^2=1$) of Bob's ancillary ququad and his $d$-dimensional subsystem.
For back teleportation, we assume that Alice resets her ququad to $\ket{0}_A$ state and Charlie resets the mass configuration in $\ket{a}=\frac{1}{2}(\ket{\mathcal{M}_{X_1  \rightarrow X_2 \rightarrow X_3 \rightarrow Y}} + \ket{\mathcal{M}_{X_1  \rightarrow X_2 \rightarrow Y \rightarrow X_3}} + \ket{\mathcal{M}_{X_1 \rightarrow Y \rightarrow X_2 \rightarrow X_3}} + \ket{\mathcal{M}_{Y \rightarrow X_1 \rightarrow X_2 \rightarrow X_3}})$ state. Now, Alice and Bob follow the same protocol for implementing respective local unitary operations on Alice's state and Bob's ancilla. The joint state of Alice's ququad system, Bob's systems and Charlie's mass configuration can be written as

\begin{align}
    \frac{1}{2}&\left(\ket{\mathcal{M}_{X_1  \rightarrow X_2 \rightarrow X_3 \rightarrow Y}} U_{X_3}U_{X_2}U_{X_1}\ket{0}_{A} (U_{Y|X_3}\otimes\mathbb{I})\ket{\psi}_{B'B} + \ket{\mathcal{M}_{X_1 \rightarrow X_2 \rightarrow Y \rightarrow X_3}} U_{X_3|Y}U_{X_2}U_{X_1}\ket{0}_{A} (U_{Y|X_2}\otimes\mathbb{I})\ket{\psi}_{B'B}\right. \nonumber \\
    &+ \left.\ket{\mathcal{M}_{X_1 \rightarrow Y \rightarrow X_2 \rightarrow X_3}} U_{X_3}U_{X_2|Y}U_{X_1}\ket{0}_{A} (U_{Y|X_1}\otimes\mathbb{I})\ket{\psi}_{B'B}+ \ket{\mathcal{M}_{Y \rightarrow X_1 \rightarrow X_2 \rightarrow X_3}}U_{X_3}U_{X_2}U_{X_1|Y}\ket{0}_{A} (U_{Y}\otimes\mathbb{I})\ket{\psi}_{B'B}\right) \nonumber \\
    = \frac{1}{4}&\left\{\ket{a}(\ket{0}_{A} \ket{\phi}_{B'B} + \ket{3}_{A}\ket{\lambda}_{B'B} + \ket{1}_{A} \ket{\mu}_{B'B} + \ket{2}_{A} \ket{\psi}_{B'B})\right. \nonumber\\
    &+\left.\ket{b}(\ket{0}_{A} \ket{\phi}_{B'B} - \ket{3}_{A}\ket{\lambda}_{B'B} - \ket{1}_{A} \ket{\mu}_{B'B} + \ket{2}_{A} \ket{\psi}_{B'B})\right. \nonumber \\
    &+ \left.\ket{c}(\ket{0}_{A} \ket{\phi}_{B'B} - \ket{3}_{A}\ket{\lambda}_{B'B} + \ket{1}_{A} \ket{\mu}_{B'B} - \ket{2}_{A} \ket{\psi}_{B'B})\right.\nonumber\\
    &+\left. \ket{d}(\ket{0}_{A} \ket{\phi}_{B'B} + \ket{3}_{A}\ket{\lambda}_{B'B} - \ket{1}_{A} \ket{\mu}_{B'B} - \ket{2}_{A} \ket{\psi}_{B'B})\right\},
\end{align}

Thereafter, Charlie measures the mass state in $\lbrace \ket{a},\ket{b},\ket{c},\ket{d} \rbrace$ basis and classically communicate his outcome $\lbrace a,b,c,d\rbrace$ to Alice; and Bob measures his ancillary system in $\lbrace |0\rangle_{B'} ,|1\rangle_{B'} , \ket{2}_{B'}, \ket{3}_{B'} \rbrace$ basis and communicate his outcome $i \in \lbrace 0,1,2,3\rbrace$ classically to Alice. Alice, upon receiving messages from Bob and Charlie, will accordingly make a correction operation on her subsystem. Consequently, the $3 \times d$-dimensional shared state between Alice and Bob becomes $\sum_{i=0}^3\sum_{j=0}^{d-1}\alpha_{ij}\ket{i}_{A}\ket{j}_B$. These details are summarized in the Table \ref{tab6}. Alice's correction operations mentioned in Table \ref{tab6} can be determined from the following expressions, 
\begin{align}
    &W_1 = \left(\begin{array}{cccc} 0  & 0 & 1 & 0\\ 1 & 0 & 0 & 0\\ 0 & 1 & 0 & 0\\ 0 & 0 & 0 & 1\end{array}\right),~~ W_2 = \left(\begin{array}{cccc} 0  & 0 & 0 & 1\\ 0 & 0 & 1 & 0\\ 1 & 0 & 0 & 0\\ 0 & 1 & 0 & 0\end{array}\right),~~ W_3 = \left(\begin{array}{cccc} 0  & 1 & 0 & 0\\ 0 & 0 & 0 & 1\\ 0 & 0 & 1 & 0\\ 1 & 0 & 0 & 0\end{array}\right),~~ W_4 = \left(\begin{array}{cccc} 1  & 0 & 0 & 0\\ 0 & 1 & 0 & 0\\ 0 & 0 & 0 & 1\\ 0 & 0 & 1 & 0\end{array}\right) \nonumber \\
   & \Omega_1 = \left(\begin{array}{cccc} 1  & 0 & 0 & 0\\ 0 & -1 & 0 & 0\\ 0 & 0 & -1 & 0\\ 0 & 0 & 0 & 1\end{array}\right),~~ \Omega_2 = \left(\begin{array}{cccc} -1  & 0 & 0 & 0\\ 0 & -1 & 0 & 0\\ 0 & 0 & 1 & 0\\ 0 & 0 & 0 & 1\end{array}\right),~~ \Omega_3 = \left(\begin{array}{cccc} -1  & 0 & 0 & 0\\ 0 & 1 & 0 & 0\\ 0 & 0 & -1 & 0\\ 0 & 0 & 0 & 1\end{array}\right).
   \label{Alicecorr4ics}
\end{align}

\end{appendix}

\end{document}